\documentclass{article}


\usepackage{arxiv}
\usepackage{thumbpdf,lmodern}
\usepackage{bm}
\usepackage{tikz, tikz-qtree}
\usetikzlibrary{positioning, matrix}


\usepackage[utf8]{inputenc}
\usepackage{xcolor}
\usepackage{amsmath,amssymb,amsthm}
\usepackage{mathabx}   
\usepackage{graphicx}
\usepackage{booktabs}
\usepackage{pdflscape}
\usepackage{cancel}
\usepackage{hyperref}
\usepackage{multirow}


\let\proglang=\textsf
\newcommand{\pkg}[1]{{\fontseries{b}\selectfont #1}}

\newtheorem{theorem}{Theorem}
\newtheorem*{remark}{Remark}
\newtheorem{corollary}{Corollary}[theorem]

\newtheorem{algorithm}{Algorithm}

\newtheorem*{example}{Example}

\setcounter{MaxMatrixCols}{12}

\setlength{\parskip}{.8em} 



\title{Enhancing Forecasts Using Real-Time Data Flow and Hierarchical Forecast Reconciliation, with Applications to the Energy Sector}

\author{Lukas Neubauer, Peter Filzmoser}

\begin{document}
\maketitle

\section*{Abstract}
A novel framework for hierarchical forecast updating is presented, addressing a critical gap in the forecasting literature. By assuming a temporal hierarchy structure, the innovative approach extends hierarchical forecast reconciliation to effectively manage the challenge posed by partially observed data. This crucial extension allows, in conjunction with real-time data, to obtain updated and coherent forecasts across the entire temporal hierarchy, thereby enhancing decision-making accuracy. The framework involves updating base models in response to new data, which produces revised base forecasts. A subsequent pruning step integrates the newly available data, allowing for the application of any forecast reconciliation method to obtain fully updated reconciled forecasts. Additionally, the framework not only ensures coherence among forecasts but also improves overall accuracy throughout the hierarchy. Its inherent flexibility and interpretability enable users to perform hierarchical forecast updating concisely. The methodology is extensively demonstrated in a simulation study with various settings and comparing different data-generating processes, hierarchies, and reconciliation methods. Practical applicability is illustrated through two case studies in the energy sector — energy generation and solar power data — where the framework yields superior results compared to base models that do not incorporate new data, leading to more precise decision-making outcomes.

\section{Introduction}
In many time series forecasting scenarios, a temporal hierarchical structure exists. This hierarchy is constructed implicitly once one or multiple steps of temporal aggregation are applied to a time series of interest. In a common example, one considers both quarterly and annual observations of an economic indicator where aggregation by summing is appropriate. Naturally, the aggregated quarterly observations should equal the annual value.

When addressing the forecasting aspect of this problem, several challenges arise. Typically, the different levels of the hierarchy are modeled and forecasted independently. This approach allows for flexibility in selecting the modeling techniques. However, this leads to incoherent forecasts, i.e. the aggregated quarterly forecasts do not align with the annual forecast. To resolve this issue, hierarchical forecast reconciliation has been developed to adjust those base forecasts into coherent forecasts through a post-hoc procedure. Initially applied to cross-sectional hierarchies, the concept of hierarchical forecast reconciliation has been extended to temporal hierarchies, too. The reconciled forecasts not only become coherent but also achieve greater accuracy. Thus, hierarchical forecast reconciliation serves as a correction for model misspecification in the base models.

In analyzing temporal hierarchies of time series, an important question emerges. The fact that lower levels of the hierarchy are observable at a higher frequency should be reflected throughout the entire hierarchy. For instance, if two quarters have already been observed, this should impact both the remaining quarterly forecasts and the annual forecast. Generally, the higher frequency time series can be updated more frequently, and each update should affect higher level forecasts accordingly.

A commonly used and effective method is bottom-up aggregation. \cite{Koreisha2004UpdatingAP} show that in the setting of temporally aggregated ARMA models, bottom-up forecasts tend to provide more accurate forecasts for the aggregated time series compared to forecasts derived from the aggregated data. This advantage becomes more pronounced as more lower level data becomes available. Interestingly, this effect is only significant for short forecast horizons.

Another common concept in this context is \textit{Nowcasting}. Originally developed for weather forecasting, nowcasting has found increasing applications in economics such as forecasting GDP or inflation using higher frequency indicators. The statistical background is rather complex. By modeling the data via a state space representation, the "news" can be appropriately incorporated. Examples of such representations are factor models and vector autoregressive models. A traditional approach involves \textit{bridging equations} to relate lower frequency data to higher frequency data. Often, these models are tailored for specific hierarchies, such as aggregating daily data to monthly, making them less flexible overall. It is important to note that, in nowcasting scenarios, the goal is usually not to explain aggregated data by using the same data at a higher frequency. Instead, a time series of lower-frequency data is explained by other, higher-frequency variables. An extensive overview can be found in \cite{BANBURA2013195_nowcasting}.

When discussing the handling of time series of different frequencies, \textit{MIxed DAta Sampling} (MIDAS) models (\cite{ghysels_midas}) are frequently encountered. These regression models are able to explain a lower frequency target variable by using higher frequency regressors, making them suitable for nowcasting scenarios as well. Initially developed for economic applications, the primary aim of those models was to regress a target variable on indicators observed at a higher frequency. However, our motivation differs, as we seek to forecast a time series of interest aggregated at various frequencies and update forecasts within this hierarchy of aggregated time series. Additionally, the traditional MIDAS models were not intended to use partially observed data, whereas our approach intends to do so. It was only in subsequent advancements, such as those of \cite{Mikosch2015RealTimeFW}, that the use of real-time data and forecast updating were considered.

The post-hoc framework for hierarchical forecast reconciliation is currently designed to work only with fully observed data. For example, in the case of a quarterly and annual hierarchy, an entire year of data must be available. Recent research by \cite{DiModica_online_fcrcon} presents a recursive and adaptive approach in which an online variant is proposed. However, the authors still do not account for partially observed data and instead rely on complete rows of new data, which does not align with the issue we are facing.

Thus, this work addresses the need for a methodology to update forecasts in a hierarchical structure using partially observed data, while still allowing for flexibility in both the model and the hierarchy. Motivated by the benefits and recent research advances of hierarchical forecast reconciliation, we investigate this very flexible framework to perform forecast updating while ensuring coherence across the entire hierarchy. We propose a simple method to employ existing reconciliation techniques using partially observed data to generate updated forecasts. This approach is very flexible as the base models can be arbitrarily chosen, and the reconciliation method can be chosen based on the specific application. Once new data is available, the base forecasts can be updated, followed by a pruning step. On the pruned hierarchy, the actual reconciliation is performed, leading to updated and coherent forecasts. This procedure allows us to theoretically show an improvement in forecast accuracy as new data becomes available. Notably, the rate of improvement increases as the amount of new data grows. The potential of forecast improvements is also demonstrated on simulated data as well as real data examples from the energy sector.

The paper is structured as follows. In Section~\ref{sec:rel_lit} relevant literature is summarised. A short introduction to hierarchical forecast reconciliation is given in Section~\ref{sec:hfr_lit}, followed by its extension to temporal hierarchies (Section~\ref{sec:tagg}). 
The hierarchical forecast updating framework is given in Section~\ref{sec:alg} and is analyzed theoretically in Section~\ref{sec:meth_th}. Finally, extensive simulation studies (Section~\ref{sec:exps}) and real data applications (Section~\ref{sec:real_apps}) are presented. Here, real data from the energy domain are used, which is a field where the scenario of forecast updating is of significant interest. Concluding remarks are given in Section~\ref{sec:concl}.

\section{Related Literature}\label{sec:rel_lit}

\subsection{Hierarchical Forecast Reconciliation}\label{sec:hfr_lit}
Hierarchical forecast reconciliation seeks to generate consistent forecasts that respect the hierarchical structure of time series data. A hierarchical structure can be captured by the so-called summing matrix $S$ such that the stacked vector $\mathbf y_t$ can be expressed as $\mathbf y_t=S\mathbf b_t$, where $\mathbf b_t$ denotes the vector of the bottom level nodes of the hierarchy. The recent review of \cite{ATHANASOPOULOS2024430} presents more detailed notation as well as various examples.

A groundbreaking approach by \cite{hyndman:opt_fc_comb} framed this problem as a generalized least squares regression problem. Consider $h$-step base forecasts $\hat{\mathbf y}_{t+h|t}=\hat{\mathbf y}_h=S\hat{\mathbf b}_h$ with corresponding bottom level base forecasts $\hat{\mathbf b}_h$. The base forecasts can originate from any model approach beforehand. Without any constraint the base forecasts are not coherent, thus a secondary step is needed to obtain coherent forecasts. With this in mind, write
\begin{align}\label{eq:reg}
\hat{\mathbf y}_h = S\bm\beta_h+\bm\epsilon_h,
\end{align}
where $\bm\beta_h = \mathbb E[b_{t+h}|\mathbf y_1,\dots,\mathbf y_t]$ denotes regression coefficients, and $\bm\epsilon_h$ is the reconciliation error with covariance matrix $V_h$. The generalized linear solution yields $\hat{\bm\beta}_h=G_h\hat{\mathbf y}_h$ and reconciled forecasts $\tilde{\mathbf y}_h=SG_h\hat{\mathbf y}_h$, with $G_h = (S'V_h^{-1}S)^{-1}S'V_h^{-1}$. The matrix $G_h$ is the so-called mapping matrix, mapping base forecasts to coherent bottom level forecasts.

\cite{wick:opt_fc_recon} introduced the minimum trace (minT) estimator, recognizing that $V_h$ is unidentifiable. This approach minimizes the trace of the reconciled forecast error covariance matrix,
\begin{align}
\min_{G} \text{tr}~\text{Cov}(\mathbf y_{t+h}-\tilde{\mathbf y}_h) = \min_G \text{tr}~SG W_h G'S',
\end{align}
subject to $SGS=S$, ensuring unbiased reconciled forecasts as long as the base forecasts are unbiased, too. This results in the optimal mapping matrix $G_h = (S'W_h^{-1}S)^{-1}S'W_h^{-1}$, generalizing the regression-based solution of \eqref{eq:reg} and guaranteeing forecast coherence and unbiasedness while minimizing errors across all levels. Instead of having to use the reconciliation error covariance matrix $V_h$, this approach allows to use the covariance matrix of the base forecast errors $W_h=\text{Cov}(\mathbf{y}_{t+h} - \hat{\mathbf y}_h)$.

The minT method offers several benefits: it produces coherent and unbiased forecasts (assuming unbiased base forecasts) and improves overall performance by minimizing forecast error variance. However, these potential gains are dependent on accurate covariance matrix estimation. \cite{PANAGIOTELIS2021343} caution that for certain realizations, reconciled forecast performance may deteriorate since minT optimizes an expected loss function, especially if the covariance matrix is misspecified.

A significant challenge lies in estimating the base covariance matrix $W_h$, particularly for complex hierarchies and large forecast horizons. To address this, researchers have proposed various simplified estimators, including equal weighting, scaled reconciliation (\cite{hyndman:opt_fc_comb}), sample and shrinkage estimators for $W_h$, and structural scaling (\cite{wick:opt_fc_recon}).

\subsection{Temporal Aggregation and Temporal Forecast Reconciliation}\label{sec:tagg}
We briefly introduce the concept of temporal aggregation based on the work of \cite{ATHANASOPOULOS201760}. This notation is rather strict, so we will present it here in detail.

Let $y_t$ with $t=1,\dots,T$ be a univariate time series of interest of a certain frequency $m$.
A $k$-aggregate, where $k$ is a factor of $m$, is defined to be
\begin{align}\label{eq:tfr_agg}
    y_j^{[k]} = \sum_{t=t^\ast +(j-1)k}^{t^\ast + jk -1} y_t,\quad j=1,\dots,\lfloor T/k\rfloor,
\end{align}
where $t^\ast=T-\lfloor T/m\rfloor m+1$ is the starting point of the aggregation to ensure non-overlapping aggregates.
The resulting frequency is then $M_k=m/k$. The general aggregation scheme is defined by $k\in\{k_p,\dots,k_2,k_1\}$ with $k_p=m,k_1=1$. 
Since the index $j$ varies over the different aggregation levels, a common index is introduced. The authors set $i=1,\dots,\lfloor T/m\rfloor$ and 
\begin{align}\label{eq:tfr_index}
    y_{M_k(i-1)+z}^{[k]} = y_j^{[k]},\quad z=1,\dots,M_k,
\end{align}
such that $i$ controls the top level steps and $z$ determines the steps within each aggregation period. On the highest aggregation level, the indices align, meaning $i=j$.
That way we can write one time step of the hierarchy as the vector given by
\begin{align}\label{eq:vec_agg}
    \mathbf y_i = \left(y_i^{[m]}, \dots, {\mathbf y_i^{[{k_2}]}}',  {\mathbf y_i^{[{k_1}]}}'\right)'\quad\text{with}\quad\mathbf y_i^{[k]} = \left(y_{M_k(i-1)+1}^{[k]}, y_{M_k(i-1)+2}^{[k]}, \dots, y_{M_k i}^{[k]}\right)',
\end{align}
where $\mathbf y_i^{[k]}$
denotes the stacked entries of the time series at aggregation level $k$. 
This implies that $\mathbf y_i = S\mathbf y_i^{[1]}$, where $S$ is an appropriate summing matrix as defined in general forecast reconciliation. The corresponding forecasts $\hat{\mathbf y}_{i+h|i}$ can be obtained by stacking the single forecasts in a similar matter. Namely, the vector of stacked $h$-step ahead forecasts of level $k$ is $\hat{\mathbf y}_{i+h|i}^{[k]}$ given by
\begin{align}
    \hat{\mathbf y}_{i+h|i}^{[k]} = \left(\hat y_{M_k(i-1+h-1)+1|M_k(i-1)}^{[k]}, \hat y_{M_k(i-1+h-1)+2|M_k(i-1)}^{[k]}, \dots, \hat y_{M_k(i+h-1)|M_k(i-1)}^{[k]}\right)'.
\end{align}
Thus, this forecast actually requires forecasts of $M_k(h-1)+1,\dots,M_kh$ steps ahead.

\begin{example}
    To exemplify the notation, consider $k\in\{12,3,1\}$. This assumes $m=12$, i.e., a monthly time series, which is aggregated to quarterly data ($k=3$) and annual data ($k=12$). The corresponding frequencies are then $M_k\in\{1,4,12\}$. The stacked vector using the common index notation is
    \begin{align*}
        \mathbf y_i' = \Big(
            y_i^{[12]},
            \underbrace{
            y_{4(i-1)+1}^{[3]},
            y_{4(i-1)+2}^{[3]},
            y_{4(i-1)+3}^{[3]},
            y_{4i}^{[3]}}_{={\mathbf y_i^{[3]}}'},
            \underbrace{
            y_{12(i-1)+1}^{[1]},
            \dots,
            y_{12(i-1)+11}^{[1]},
            y_{12i}^{[1]}}_{={\mathbf y_i^{[1]}}'}
        \Big).
    \end{align*}
    The corresponding summing matrix $S\in\{0,1\}^{17\times 12}$ such that $\mathbf y_i=S\mathbf y_i^{[1]}$ is
    \begin{align*}
        \begin{pmatrix}
            1 & 1 & 1 & \hdotsfor{8} & 1 \\
            1 & 1 & 1 & 0 & \hdotsfor{7} & 0 \\
            0 & 0 & 0 & 1 & 1 & 1 & 0 & \hdotsfor{4} & 0 \\
            0 & \hdotsfor{4} & 0 & 1 & 1 & 1 & 0 & 0 & 0 \\
            0 & \hdotsfor{7} & 0 & 1 & 1 & 1 \\
            &&&&&& I_{12}
        \end{pmatrix} = 
        \begin{pmatrix}
            I_1 &\otimes &\mathbf 1_{12}' \\
             I_4 &\otimes &\mathbf 1_{3}'\\
             I_{12} &\otimes &\mathbf 1_{1}'
        \end{pmatrix},
    \end{align*}
    where $\mathbf 1_a$ denotes the vector of ones of length $a$, and $\otimes$ is the Kronecker product.
\end{example}


The relationship of $\mathbf y_i = S\mathbf y_i^{[1]}$ allows for the formulation of a similar regression problem using base forecasts. Applying the minimum trace approach yields, for a top level forecast horizon of $h\geq 1$, 
\begin{align}
    \tilde{\mathbf{ y}}_{h} = \tilde{\mathbf{ y}}_{i+h|i} = S(S'W_h^{-1}S)^{-1}S'W_h^{-1}\mathbf{\hat y}_{i+h|i},
\end{align}
where $\hat{\mathbf y}_{h} = \hat{\mathbf y}_{i+h|i}$ represents the base forecasts across all hierarchical levels stacked the way the data are, and $W_h=\text{Cov}(\mathbf y_{i+h}-\mathbf{\hat y}_h)$ is the covariance matrix of the stacked base forecast errors. Here, one usually assumes that the base forecast errors are at least jointly conditionally covariance-stationary. This formulation requires forecasts with $M_k h$-steps ahead for each aggregation level, with $M_k$ denoting the frequency of aggregation level $k$.

To address estimation challenges, researchers have proposed various simplified estimators. These include scaled reconciliation, similar to the approach of \cite{hyndman:opt_fc_comb}, and structural scaling, as suggested by \cite{wick:opt_fc_recon}. \cite{NYSTRUP2020876} introduced autocorrelation-based methods such as autocovariance scaling, Markov scaling, GLASSO for inverse cross-correlation estimation, and cross-correlation shrinkage. Further research by \cite{NYSTRUP20211127} explored dimension reduction techniques, utilizing the eigendecomposition of the cross-correlation matrix to create a filtered precision matrix - an approach particularly valuable for complex, deep hierarchies. These methods aim to enhance forecast accuracy by leveraging information across different temporal aggregation levels.

\section{Methodology}\label{sec:meth}
To this end consider the highest frequency $k=1$ or non-aggregated time index and time step $1<s<T$ where $s=m(i-1)+z$ and $1\leq z<m$ as denoted in Eq.~\eqref{eq:tfr_index}. This implies that at this time point, we have completely observed data $\mathbf y_1,\dots,\mathbf y_{i-1}$ and some partially observed data. Namely, at the bottom level these are 
\begin{align}
    y_{m(i-1)+1}^{[1]},\dots,y_{m(i-1)+z}^{[1]},
\end{align}
which leads to not yet observed, future data of
\begin{align}
y_{m(i-1)+z+1}^{[1]},\dots,y_{mi}^{[1]}.
\end{align}
Based on the number of new data on the bottom level $z$, for each level aggregation the number of new observations can be inferred. Namely, for $k\in\{m,k_{p-1},\dots,k_2,1\}$ we have $\lfloor z/k \rfloor$ new observations. A sanity check for $k=m$ yields $0$ new observations at the top level since $z<m$.

Naturally, we are interested in $\hat{\mathbf y}_i$. Following the notation in Section~\ref{sec:tagg} we consider the one-step ahead forecasts conditioned on time $i-1$. These one-step ahead forecasts, broken down into forecasts of each level of the hierarchy, following Eq.~\eqref{eq:vec_agg}, are actually 
\begin{align}
    \hat{\mathbf y}_{i|i-1} = \left(\hat y_{i|i-1}^{[m]}, \dots, (\hat{\mathbf y}_{i|i-1}^{[k_2]})',(\hat{\mathbf y}_{i|i-1}^{[{k_1}]})' \right)',
\end{align} where 
\begin{align}
    \hat{\mathbf y}_{i|i-1}^{[k]}=\left(\hat y_{M_k(i-1)+1|M_k(i-1)}^{[k]}, \hat y_{M_k(i-1)+2|M_k(i-1)}^{[k]}, \dots, \hat y_{M_k i|M_k(i-1)}^{[k]}\right)'.
\end{align}

At time $s$ we have new data available as described, and thus may update the base forecasts. Namely, by introducing the notion of $\hat{\mathbf y}_{i|i-1,z}^{[k]}=\hat{\mathbf y}_{i|z}^{[k]}$ we have
\begin{align}
    \hat{\mathbf y}_{i|z}^{[k]} &= \left(\hat y_{M_k(i-1)+1|M_k(i-1)+\lfloor z/k \rfloor}^{[k]}, \hat y_{M_k(i-1)+2|M_k(i-1)+\lfloor z/k \rfloor}^{[k]}, \dots, \hat y_{M_k i|M_k(i-1)+\lfloor z/k \rfloor}^{[k]}\right)' \nonumber \\
    &= \left( y_{M_k(i-1)+1}^{[k]}, \dots, y_{M_k(i-1)+\lfloor z/k \rfloor}^{[k]}, 
    \hat y_{M_k(i-1)+\lfloor z/k \rfloor+1|M_k(i-1)+\lfloor z/k \rfloor}^{[k]}, \dots, \hat y_{M_k i|M_k(i-1)+\lfloor z/k \rfloor}^{[k]}\right)'. \label{eq:hat_vec_upd}
\end{align}

If $\lfloor z/k\rfloor = 0$, then $\hat{\mathbf y}_{i|z}^{[k]} = \hat{\mathbf y}_{i|i-1}^{[k]}$ and no forecast can be updated on this level. Thus, the base forecasts are updated accordingly, and the first $\lfloor z/k \rfloor$ forecasts are replaced by the actually observed values as illustrated in Figure~\ref{fig:421_hier}. This figure shows an updated data row of the simple hierarchy containing quarters, half-years, as well as the entire year with $M_4=1,M_2=2,M_1=4$, and $z=2$, indicating that the first two quarters have been observed.

\begin{figure}[!ht]
    \centering
        \begin{tikzpicture}
            \tikzstyle{every node}=[font=\small]
            \matrix (m)[
                matrix of math nodes,
                nodes in empty cells,
            ] {
                \Big(\hat y_{i|i-1}^{[4]} && \node(tbi){\hat y_{2(i-1)+1|2(i-1)}^{[2]}}; && \node(tbi2){\hat y_{2i|2(i-1)}^{[2]}}; && \node(tq1){\hat y_{4(i-1)+1|4(i-1)}^{[1]}}; && \node(tq2){\hat y_{4(i-1)+2|4(i-1)}^{[1]}}; && \node(tq3){\hat y_{4(i-1)+3|4(i-1)}^{[1]}}; && \node(tq4){\hat y_{4i|4(i-1)}^{[1]}\Big)}; \\
                \\
                \\
                \Big(\hat y_{i|i-1}^{[4]} && \node(bbi){y_{2(i-1)+1}^{[2]}}; && \node(bbi2){\hat y_{2i|2(i-1)+1}^{[2]}}; && \node(bq1){y_{4(i-1)+1}^{[1]}}; && \node(bq2){y_{4(i-1)+2}^{[1]}}; && \node(bq3){\hat y_{4(i-1)+3|4(i-1)+2}^{[1]}}; && \node(bq4){\hat y_{4i|4(i-1)+2}^{[1]}\Big)}; \\
            } ;

            \begin{scope}[thick]
                \draw [->] (tbi)--(bbi);
                \draw [->] (tq1)--(bq1);
                \draw [->] (tq2)--(bq2);  
            \end{scope}
            
            \begin{scope}[dashed, blue, thick]
                \draw [->] (tbi2)--(bbi2);
                \draw [->] (tq3)--(bq3);
                \draw [->] (tq4)--(bq4);
            \end{scope}
        \end{tikzpicture}
    \caption{Illustration of an updated data row in an annual-biannual-quarterly hierarchy with two quarters of new data. A black solid arrow indicates the new data, and a blue dashed arrow shows the updated forecast. The annual forecast is not changed here.}
    \label{fig:421_hier}
\end{figure}
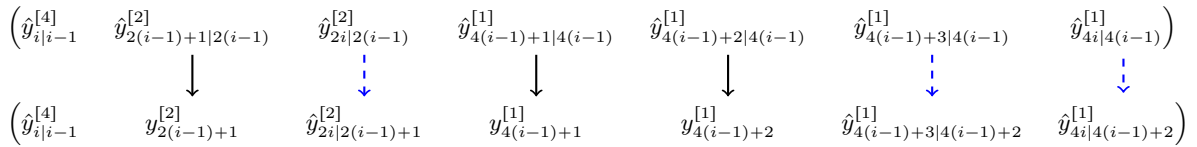

Next, the updated forecasts need to be incorporated into the entire hierarchy. For this, we use forecast reconciliation to obtain coherent forecasts across all levels of the hierarchy and utilize the new lower level information at the higher levels. Using any reconciliation method, such as the minimum trace (minT) reconciliation approach (\cite{wick:opt_fc_recon}), which utilizes the covariance of the base forecast errors, does not work easily. Two major challenges arise.

\begin{itemize}
    \item One can train the reconciliation method based on "no new data", and use this result to adjust the updated base forecasts. Without any further constraints, this may produce transformed observed values that are undesirable. Hence, more constraints are necessary, which complicates the optimization problem of determining the optimal mapping matrix. \cite{wickramasuriya2020optimal} and \cite{DIFONZO202313} illustrate the addition constraints to the reconciliation problem while \cite{ZHANG2023650} focus on the specific constraint of immutable forecasts.

    \item Alternatively, one can train the reconciliation method using a combination of observed and updated forecasts as illustrated in Figure~\ref{fig:421_hier}. This approach would produce a singular base error covariance matrix since treating observed values as forecasts would result in errors of $0$. This requires further considerations. The issue of needing additional constraints still persists.
\end{itemize}

Due to these challenges, we propose a new method that takes into account both the observed data and the updated forecasts while appropriately utilizing covariance-based reconciliation methods . To eliminate the need for constraints, we introduce a pruning step after updating the base forecasts. Using the pruned hierarchy, we then perform the forecast reconciliation step to obtain reconciled forecasts. Finally, we convert the reconciled forecasts back to the original hierarchy of interest. This approach yields updated and coherent forecasts by effectively leveraging all newly available data. More details of this procedure are given in Section~\ref{sec:alg}.

We use Figure~\ref{fig:421_hier2} to illustrate the concept of pruning. We begin by removing the observed values from each level of the hierarchy. For example, the observed values of Q1 and Q2 as well as the first biannual value are excluded from the reconciliation input. This results in a pruned hierarchy that focuses solely on reconciling the second half of the year. To achieve this, we also need to adjust the annual base forecast by subtracting the observed first biannual value. As a result, the remaining annual forecast must align with the second half-year forecast after reconciliation. This method enables us to incorporate all previous base forecasts, updated base forecasts, and any newly available data.

\begin{figure}[!ht]
    \centering
    \begin{tikzpicture}
        \tikzset{every tree node/.style={align=center,anchor=north}}
        \Tree [.\node[draw](l){Annual}; 
            [.\node(la){$\hat y_{i}^{[4]}$};
            [
            [.\node[draw, fill=green]{$\text{Biannual}_1$};
                [.\node(bi1){$\hat y_{2(i-1)+1}^{[2]}$};
                    [.\node[draw, fill=green]{$\text{Q}_1$}; \node(q1){$\hat y_{4(i-1)+1}^{[1]}$}; ] 
                    [.\node[draw, fill=green]{$\text{Q}_2$}; \node(q2){$\hat y_{4(i-1)+2}^{[1]}$}; ] 
                ] 
            ]
            [.\node[draw]{$\text{Biannual}_2$};
            [.\node(bi2){$\hat y_{2i}^{[2]}$};
                    [.\node[draw]{$\text{Q}_3$}; \node(q3){$\hat y_{4(i-1)+3}^{[1]}$}; ] 
                    [.\node[draw]{$\text{Q}_4$}; \node(q4){$\hat y_{4i}^{[1]}$}; ] 
                ] ]
        ]]]
        \begin{scope}[xshift=2.5in]
            \Tree [.\node[draw](r){Annual}; 
            [.\node(ra){$\hat y_{i}^{[4]}-y_{2(i-1)+1}^{[2]}$};
            [
            [.\node[draw]{$\text{Biannual}_2$};
            [.\node(rbi2){$\hat y_{2i|2(i-1)+1}^{[2]}$};
                    [.\node[draw]{$\text{Q}_3$}; \node(rq3){$\hat y_{4(i-1)+3|4(i-1)+2}^{[1]}$}; ] 
                    [.\node[draw]{$\text{Q}_4$}; \node(rq4) {$\hat y_{4i|4(i-1)+2}^{[1]}$}; ] 
                ] ]
        ]]]

        \end{scope}        

    \node(vec2) at (la |- bi1) {$\hat{\mathbf y}_i^{[2]}$};
    \node[left = of bi2, right = of bi1, below = 1in of vec2] (vec1) {$\hat{\mathbf y}_i^{[1]}$};

    \node(rvec1) at (ra |- vec1) {$\hat{\mathbf y}_{i|2}^{[1]}$};
    \node(rvec2) [left = of rbi2, right = of bi2] {$\hat{\mathbf y}_{i|2}^{[2]}$};
    \node(rvec4) at (rvec2 |- la) {$\hat{y}_{i|2}^{[4]}$};

    \begin{scope}[dotted]
        \draw [->] (bi1)--(vec2);
        \draw [->] (bi2)--(vec2);
        \draw [->] (q1)--(vec1);
        \draw [->] (q2)--(vec1);
        \draw [->] (q3)--(vec1);
        \draw [->] (q4)--(vec1);

        \draw [->] (rq3)--(rvec1);
        \draw [->] (rq4)--(rvec1);

        \draw [->] (rbi2)--(rvec2);
        \draw [->] (ra)--(rvec4);

    \end{scope}
    
    \path[shorten >=5pt, shorten <=5pt,bend left, black, thick] (l) edge[->] node[midway, above]{new data} (r) ;

    \end{tikzpicture}
    \caption{Visualization of an annual-biannual-quarterly temporal hierarchy in a hierarchical updating setting. Here, we assume that two quarters have already been observed, hence also the first half-year (green). Thus, we may update the base forecast of the quarterly and biannual aggregated time series, but cannot yet update the annual base forecast. The tree on the right shows the pruned hierarchy with adjusted base forecasts, which is used to perform hierarchical forecast updating.}
    \label{fig:421_hier2}
\end{figure}
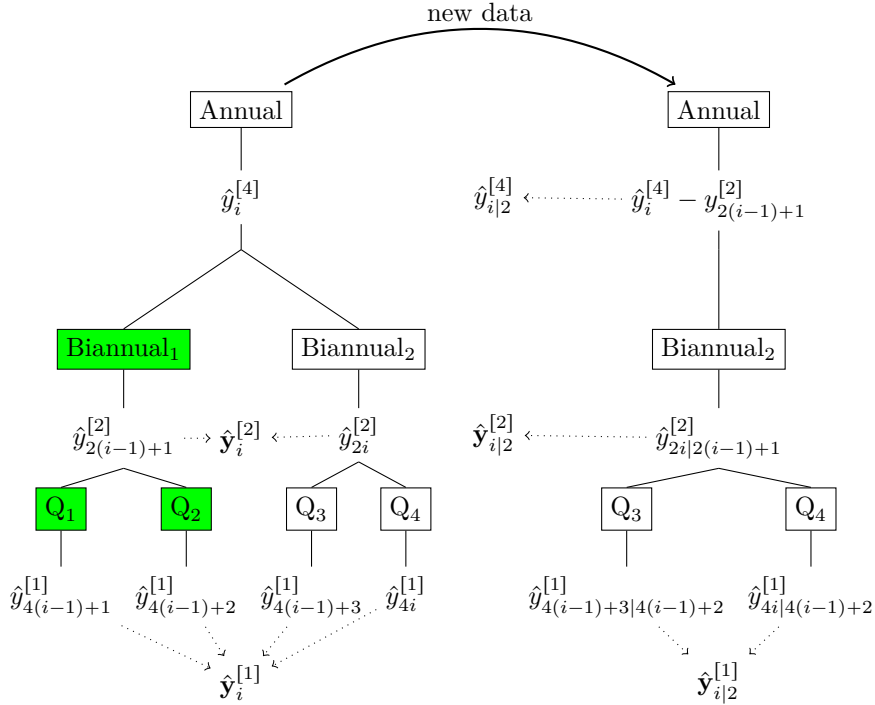

\subsection{Forecast Updating Framework}\label{sec:alg}

Consider a temporally aggregated time series $\mathbf y_i$ as defined in Eqs. \eqref{eq:tfr_agg}-\eqref{eq:vec_agg} with $k\in K=\{m, k_{p-1}, \dots, k_2,1\}$. We assume that once sufficient bottom level data are observed, the corresponding higher level data are observed, too. Let $z\in\{0, 1, \dots, m-1\}$ be the number of new observations based on the bottom level of the hierarchy. This implies higher level steps for $k$ of $\lfloor z/k \rfloor$. If $z=0$, then the following procedure performs regular hierarchical forecast reconciliation as previously denoted with fully observed rows of data. 
\begin{algorithm}[Hierarchical Forecast Updating]\label{alg:1}
Given a temporal hierarchy of time series with base forecasts and $1\leq z < m$. Updated forecasts for the entire hierarchy can be obtained as follows.
\begin{enumerate}
    \item Update base forecasts: on each level of the hierarchy obtain $m/k-\lfloor z/k\rfloor$ new forecasts based on the new data $\mathbf y_{i; 1,\dots,\lfloor z/k\rfloor}^{[k]}$, and collect the forecasts in $\hat{\mathbf y}_{i|z}^{[k]}$. 

    \item Prune hierarchy: the original hierarchy with $m_0=\sum_{k\in K} m/k$ nodes\footnote{The equality $m_0=\sum_{k\in K}k$ holds if and only if the set $\{k_{p-1},\dots,k_2\}$ contains all divisors of $m$ as in the example of $K=\{12,6,4,3,2,1\}$.} is cut to a pruned hierarchy with $m_z=\sum_{k\in K} m/k - \lfloor z/k\rfloor$ nodes. For each level $k_q$ and $\lfloor z/k_{q} \rfloor < u \leq M_{k_q}$, we compute 
        \begin{align}\label{eq:red}
            \hat{{y}}_{M_{k_q}(i-1)+u|z}^{[k_q](r)} &= 
            \hat{{y}}_{M_{k_q}(i-1)+u|z}^{[k_q]} - \sum_{p=1}^{q-1}\sum_{w=w_p}^{\lfloor z/k_p\rfloor}y_{M_{k_p}(i-1) + w}^{[k_p]}
        \end{align}
    where 
    $w_p=k_{p+1}/k_p\max(\lfloor z/k_{p+1\rfloor},u-1) + 1$. 
    This maximum is required in order to stick to the correct subhierarchy and not double-count entries.
    The observed values for $1\leq u\leq \lfloor z/k_{q} \rfloor$ are removed and not considered anymore. This results in vectors $\hat{{\mathbf y}}_{i|z}^{[k_q](r)}$ which are again stacked accordingly. In matrix notation, we can write $\hat{{\mathbf y}}_{i|z}^{(r)} = P_z\hat{{\mathbf y}}_{i|z} - R_z{\mathbf y}_i$ with the pruning matrix 
    \begin{align*}
        P_z = \text{diag}(0_{m/k-\lfloor z/k\rfloor\times\lfloor z/k\rfloor}~I_{m/k-\lfloor z/k\rfloor},k\in K),
    \end{align*}
    whereas the reduction matrix $R_z$ is given elementwise by Eq.~\eqref{eq:red}.

    \item Perform temporal hierarchical forecast reconciliation: given a mapping matrix $G_z$ and the summing matrix of the pruned hierarchy $S_z$ calculate the reconciled forecasts $\tilde{{\mathbf y}}_{i|z}^{(r)}=S_zG_z\hat{{\mathbf y}}_{i|z}^{(r)}$. The pruned summing matrix $S_z$ can actually be derived from the original hierarchy's summing matrix $S$ and is given by
    \begin{align}
        S_z = P_zS\begin{pmatrix}
            0_{z\times m-z}\\
            I_{m-z}
        \end{pmatrix}, 
    \end{align}
    
    while the mapping matrix used to perform the reconciliation is either set manually (e.g. to perform a bottom-up approach) or calculated based on data using any reconciliation approach such as minT.

    \item Add pruned nodes: calculate the reconciled forecasts for the original hierarchy by reversing Eq.~\eqref{eq:red}, i.e.
    \begin{align}\label{eq:rev_red}
        \tilde{{y}}_{M_{k_q}(i-1)+u|z}^{[k_q]}=\tilde{{y}}_{M_{k_q}(i-1)+u|z}^{[k_q](r)} + 
        \sum_{p=1}^{q-1}\sum_{w=w_p}^{\lfloor z/k_p\rfloor}y_{M_{k_p}(i-1) + w}^{[k_p]}
    \end{align}
    as well as adding back the observed values with respect to the original hierarchy, resulting in totally coherent and updated forecasts $\tilde{\mathbf y}_{i|z}$.
\end{enumerate}
\end{algorithm}

\begin{remark}  
    It is important to note that the assumed tree structure of the hierarchy is only possible if and only if the set $\{k_{p-1},\dots,k_2\}$ does not contain any co-prime pairs, i.e. there are no pairs of $k_i,k_j$ which have a common divisor greater than $1$. The example of $K=\{12,6,4,3,2,1\}$ can only be represented by two graphs, and hence the iterative pruning of Eq.~\eqref{eq:red} and reverse pruning of Eq.~\eqref{eq:rev_red} does not make sense. However, the pruning can be generalized to just using the bottom level time series. Namely,
    \begin{align}\label{eq:red_alt}
         \hat{{y}}_{M_{k_q}(i-1)+u|z}^{[k_q](r)} &= \hat{{y}}_{M_{k_q}(i-1)+u|z}^{[k_q]} -  \sum_{w=k_q(u-1) + 1}^{z}y_{m(i-1) + w}^{[1]},
    \end{align}
    where $\lfloor z/k_q\rfloor < u < M_{k_q}$ denotes the time step of aggregation level $k_q$. This representation of the pruning and reduction step is independent of the possible graph structure of the hierarchy since it uses solely bottom level information.
\end{remark}

Figure~\ref{fig:421_hier2} visualizes steps $1$ and $2$ for the simple hierarchy also used in Figure~\ref{fig:421_hier}. The base forecasts are updated wherever possible, followed by appropriately pruning the hierarchy and, therefore, regarding all newly available data. To clarify the algorithm's notation, especially of step $2$, example calculations are carried out in Appendix~\ref{app:alg_not}

It is important to note that this algorithm does not require any additional assumptions about the time series, as it operates solely on the realizations. In step 3, when we perform the forecast reconciliation to determine the optimal mapping matrix, the underlying data-generating process of the time series may be relevant, as suggested in Section~\ref{sec:meth_th}. For a theoretical analysis, additional assumptions about the time series will be necessary.

The algorithm is implemented in the \pkg{FTATS} package (\cite{FTATS-pkg}) in \proglang{R} under the function \texttt{reconcile\_forecasts\_partly} which is based on the \texttt{reconcile\_forecasts} function of the very same package. We opted to use our own implementations in contrast to already available packages such as \pkg{hts} (\cite{hts-pkg}) or \pkg{FoReco} (\cite{foreco-pkg}) to have more control over the methods and parameters as well as have more detailed outputs.

\subsection{Theoretical Analysis}\label{sec:meth_th}

Algorithm~\ref{alg:1} is analyzed to prove theoretical improvements. Theorem~\ref{th:general} gives a general statement about the improvements, while in Theorem~\ref{th:1}, we have stronger assumptions allowing us to concretely compute the properties at hand.

\begin{theorem}\label{th:general}
    Given a temporal hierarchy of time series with $K=\{m,k_{p-1},\dots,k_2,1\}$ with jointly covariance-stationary base forecast errors, Algorithm~\ref{alg:1} using the minT approach improves the base forecasts for the remaining observations, namely
    \begin{align}
        \text{tr}~\text{Cov}(P_z(\mathbf y-\tilde{\mathbf y}_z)) \leq \text{tr}~\text{Cov}(P_z(\mathbf y-\hat{\mathbf y})),
    \end{align}
    for any $0\leq z<m$. In fact, this is true for every level of the temporal hierarchy.
\end{theorem}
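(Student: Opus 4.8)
The plan is to reduce the statement to the classical minimum-trace variance-reduction property applied to the \emph{pruned} hierarchy, and then to upgrade that property from a single total-trace inequality to a per-node (hence per-level) inequality. Throughout I read $\hat{\mathbf y}$ in the right-hand side as the updated base forecasts $\hat{\mathbf y}_{i|z}$ that actually enter the reconciliation in step~3.

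First I would show that both sides only involve quantities living on the pruned hierarchy. The crucial observation is that the terms subtracted in the reduction step (Eq.~\eqref{eq:red}) and added back in step~4 (Eq.~\eqref{eq:rev_red}) are realised, deterministic observed values at time $s$; conditioned on the information available at $s$ they carry no randomness and hence do not affect any covariance. Writing $\mathbf y^{(r)}=P_z\mathbf y-R_z\mathbf y_i$ for the reduced true values and $\hat{\mathbf y}^{(r)}_{i|z}=P_z\hat{\mathbf y}_{i|z}-R_z\mathbf y_i$, the $R_z\mathbf y_i$ contributions cancel, so on the remaining nodes $P_z(\mathbf y-\hat{\mathbf y})=\mathbf y^{(r)}-\hat{\mathbf y}^{(r)}_{i|z}=:\mathbf e^{(r)}$ and $P_z(\mathbf y-\tilde{\mathbf y}_z)=\mathbf y^{(r)}-\tilde{\mathbf y}^{(r)}_{i|z}=:\tilde{\mathbf e}^{(r)}$. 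One also checks that the reduced truth is coherent for the pruned summing matrix, $\mathbf y^{(r)}=S_z\mathbf b^{(r)}$, which is exactly what the index bookkeeping in Eq.~\eqref{eq:red} is designed to guarantee. Hence both traces reduce to $\text{tr}\,\text{Cov}(\tilde{\mathbf e}^{(r)})$ and $\text{tr}\,\text{Cov}(\mathbf e^{(r)})$ on the pruned hierarchy.

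Next I would establish the minT error identity on the pruned hierarchy. Let $M_z=S_zG_z$ with $G_z=(S_z'W_z^{-1}S_z)^{-1}S_z'W_z^{-1}$ and $W_z=\text{Cov}(\mathbf e^{(r)})$. Since the reconciliation is unbiased ($S_zG_zS_z=S_z$) and the reduced truth is coherent, substituting $\hat{\mathbf y}^{(r)}_{i|z}=\mathbf y^{(r)}-\mathbf e^{(r)}$ into $\tilde{\mathbf y}^{(r)}_{i|z}=M_z\hat{\mathbf y}^{(r)}_{i|z}$ makes the term $(I-M_z)\mathbf y^{(r)}=(I-M_z)S_z\mathbf b^{(r)}=0$ vanish, leaving $\tilde{\mathbf e}^{(r)}=M_z\mathbf e^{(r)}$ and the compact form $\text{Cov}(\tilde{\mathbf e}^{(r)})=M_zW_zM_z'=S_z(S_z'W_z^{-1}S_z)^{-1}S_z'$. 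The conclusion then follows from recognising an orthogonal projection: with $\tilde M_z=W_z^{-1/2}S_z(S_z'W_z^{-1}S_z)^{-1}S_z'W_z^{-1/2}$ one verifies $\tilde M_z=\tilde M_z'=\tilde M_z^2$, so $\tilde M_z$ is an orthogonal projection and $\text{Cov}(\tilde{\mathbf e}^{(r)})=W_z^{1/2}\tilde M_zW_z^{1/2}$. For the total trace, $\text{tr}\,\text{Cov}(\tilde{\mathbf e}^{(r)})=\text{tr}(\tilde M_zW_z)=\text{tr}(W_z)-\text{tr}((I-\tilde M_z)W_z)\le\text{tr}(W_z)$, since $(I-\tilde M_z)$ and $W_z$ are positive semidefinite and the trace of a product of two such matrices is nonnegative. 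For the sharper "every level" claim, fix a node $j$ and set $v_j=W_z^{1/2}e_j$; then $\text{Var}(\tilde e^{(r)}_j)=v_j'\tilde M_zv_j=\|\tilde M_zv_j\|^2\le\|v_j\|^2=(W_z)_{jj}=\text{Var}(e^{(r)}_j)$, because an orthogonal projection never increases the norm. Summing these node-wise inequalities over the nodes of any fixed aggregation level $k$ gives the per-level statement, and summing over all nodes recovers the main inequality.

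The main obstacle I anticipate is not the projection algebra (which is the familiar minT argument) but the first reduction step: one must verify carefully, through the index bounds $w_p$ in Eq.~\eqref{eq:red}, that the reduced true values are genuinely coherent for $S_z$ and that the observed contributions cancel exactly, so that covariances are preserved under steps~2 and~4. A secondary but essential point is ensuring $W_z$ is positive definite: the whole benefit of pruning, as opposed to treating observed values as zero-error ``forecasts'' (which was flagged as producing a singular base covariance), is that the pruned hierarchy contains only genuine, non-degenerate forecasts, so that $W_z^{-1}$, $W_z^{1/2}$ and $(S_z'W_z^{-1}S_z)^{-1}$ all exist. This is precisely where the jointly covariance-stationary assumption enters.
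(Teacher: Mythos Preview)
Your projection argument for the minT variance reduction on the pruned hierarchy is correct and slightly more explicit than the paper (which simply cites the Loewner inequality $W_z\ge S_z(S_z'W_z^{-1}S_z)^{-1}$). However, you have misread the theorem and therefore proved only half of it. In the paper $\hat{\mathbf y}$ on the right-hand side is the \emph{original, non-updated} base forecast vector; the claim is that the full procedure (update $+$ reconcile) beats doing nothing at all. By declaring ``I read $\hat{\mathbf y}$ as the updated base forecasts $\hat{\mathbf y}_{i|z}$'' you reduce the theorem to the standard minT improvement on $S_z$, which is not the new content here.

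The part you are missing is the comparison $P_zWP_z'\ge W_z$, where $W=\text{Cov}(\mathbf y-\hat{\mathbf y})$ is the original base error covariance and $W_z$ is the updated one on the pruned nodes. This is exactly where the jointly covariance-stationary hypothesis does its real work (not merely, as you suggest, to guarantee invertibility of $W_z$). The paper exploits that updating shortens the forecast horizons: by stationarity the updated errors on the remaining $m/k-\lfloor z/k\rfloor$ nodes of level $k$ have the same joint covariance as the \emph{first} $m/k-\lfloor z/k\rfloor$ original errors of that level, so $W_z=\check P_zW\check P_z'$ for a second selection matrix $\check P_z$ that picks leading rather than trailing blocks. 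One then argues that $P_zWP_z'-\check P_zW\check P_z'$ is positive semidefinite, and chains this with the minT inequality to obtain $P_zWP_z'\ge W_z\ge S_z(S_z'W_z^{-1}S_z)^{-1}$, from which the trace and per-level statements follow via the diagonal of the Loewner order. Without this step your argument cannot reach the right-hand side $\text{tr}\,\text{Cov}(P_z(\mathbf y-\hat{\mathbf y}))$ as the paper intends it.
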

The proof can be found in Appendix~\ref{app:proof}. An extension is given in the following corollary.

\begin{corollary}
    As a consequence of Theorem~\ref{th:general} and some additional matrix algebra, we also have for $z_1\leq z_2$ that
    \begin{align*}
        S_{z_2}(S_{z_2}'W_{z_2}^{-1}S_{z_2})^{-1} \leq PS_{z_1}(S_{z_1}'W_{z_1}^{-1}S_{z_1})^{-1}P',
    \end{align*}
    with an appropriate pruning matrix $P$ of dimension $m_{z_2}\times m_{z_1}$. This implies that the improvements increase as more and more data are available.
\end{corollary}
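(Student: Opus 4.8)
The plan is to recognize that, up to the deterministic reduction and reverse-reduction steps of Algorithm~\ref{alg:1} (which only add or subtract observed, error-free quantities and hence leave covariances unchanged), the matrix on each side is exactly a minT reconciled error covariance. First I would record the standard minT identity: with $G_z=(S_z'W_z^{-1}S_z)^{-1}S_z'W_z^{-1}$ the reconciled error covariance is $S_zG_zW_zG_z'S_z'=S_z(S_z'W_z^{-1}S_z)^{-1}S_z'$, and the bottom-level reconciled covariance is $(S_z'W_z^{-1}S_z)^{-1}$. Both sides of the claim are therefore genuine symmetric positive semidefinite reconciled covariances, and the matrix $P$ of size $m_{z_2}\times m_{z_1}$ is the row selector keeping the nodes that survive from the $z_1$-hierarchy into the more heavily pruned $z_2$-hierarchy. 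The target then reads: the reconciled covariance of the $z_2$-problem is dominated in the Löwner order by the principal submatrix $P(\cdot)P'$ of the reconciled covariance of the $z_1$-problem, and the same monotonicity governs the bottom-level form $(S_z'W_z^{-1}S_z)^{-1}$.

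Next I would isolate two monotonicity ingredients. The first is pure matrix algebra: for fixed full-column-rank $S$ the map $W\mapsto S(S'W^{-1}S)^{-1}S'$ is monotone increasing in $W$ (Löwner), which follows by chaining the antitonicity of inversion ($W_a\leq W_b\Rightarrow W_b^{-1}\leq W_a^{-1}$), the congruence $S'(\cdot)S$, a second inversion, and a final congruence by $S$; the same chain at the bottom level gives monotonicity of $(S'W^{-1}S)^{-1}$. The second, substantive ingredient compares the base error covariances across $z$. Passing from $z_1$ to $z_2$ observed bottom-level values fixes the newly revealed cells to their error-free realizations, so the $z_2$-reconciliation is the $z_1$-reconciliation augmented by exact linear equality constraints, equivalently by conditioning on the new data. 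I would formalize this as a Schur-complement statement, namely that $W_{z_2}$ is the conditional covariance of the surviving base forecast errors given the newly observed ones, yielding $W_{z_2}\leq PW_{z_1}P'$ with equality exactly when the new data are uninformative. Combining the conditioning inequality with the matrix-algebraic monotonicity then produces the stated bound.

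The main obstacle is that $S$ and $W$ change simultaneously between the two problems, so the fixed-$S$ monotonicity cannot be invoked verbatim. I would resolve this by exhibiting $S_{z_2}$ as a submatrix of $S_{z_1}$: since the $z_2$-hierarchy is obtained by deleting the newly observed nodes and their bottom cells, there are $0/1$ selection matrices with $S_{z_2}=P\,S_{z_1}\,Q$, where $P$ selects surviving rows and $Q$ selects surviving bottom-level columns. Substituting this into $S_{z_2}(S_{z_2}'W_{z_2}^{-1}S_{z_2})^{-1}S_{z_2}'$ together with $W_{z_2}$ being the Schur complement of $W_{z_1}$ reduces the claim to a constrained-GLS identity: the estimator covariance under the extra equality constraints is the $Q$-block of the unconstrained covariance, which is never larger than the corresponding block of the $z_1$-estimator. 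This is where the "additional matrix algebra" of the statement lives — checking that block-selection by $P,Q$ and the Schur complement commute so that the congruence by $S_{z_1}$ applies cleanly.

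Finally I would flag the assumption actually used: reducing to a Schur complement, $W_{z_2}=\operatorname{Cov}(\text{surviving errors}\mid\text{new data})$, requires the base forecasts to be updated by (linear) conditional expectation, so that more data genuinely lowers the base error covariance in the Löwner order; this is what the joint covariance-stationarity hypothesis of Theorem~\ref{th:general} supplies, and what the fully specified process of Theorem~\ref{th:1} makes exact. A telescoping remark — that it suffices to prove the one-step case $z_2=z_1+1$ and then compose the pruning matrices — lets me present the Schur-complement argument for a single additional observation, keeping the index bookkeeping of Eq.~\eqref{eq:red} manageable; taking traces of the resulting Löwner inequality recovers and strengthens the monotone improvement asserted in Theorem~\ref{th:general}.
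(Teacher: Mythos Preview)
Your route diverges from the paper's at the second ingredient. You cast $W_{z_2}$ as the Schur complement of $W_{z_1}$ with respect to the newly observed block --- the conditional covariance of the surviving base errors given the new data --- and in your final paragraph you claim that joint covariance-stationarity ``supplies'' this conditioning mechanism. It does not. In the paper, updated base forecasts are obtained by re-running the level-$k$ base model with $\lfloor z/k\rfloor$ additional observations, so the surviving forecast horizons simply shift from $1,\dots,M_k$ down to $1,\dots,M_k-\lfloor z/k\rfloor$. Stationarity then yields the identity used in the proof of Theorem~\ref{th:general}, namely $W_z=\check P_z W\check P_z'$: the updated base error covariance is a \emph{principal submatrix} of $W$, not a Schur complement of it. The inequality $W_{z_2}\le PW_{z_1}P'$ that you need does still hold under the paper's hypotheses, but for the submatrix reason, not the conditioning one; your stated justification rests on an updating mechanism (linear conditional expectation) that the paper does not assume.

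The route implicit in the paper is to iterate the Theorem~\ref{th:general} computation with $z_1$ playing the role of~$0$: write $W_{z_2}=\check P\,W_{z_1}\,\check P'$ for a relative shift matrix $\check P$, obtain $PW_{z_1}P'\ge W_{z_2}$ from the same $(P-\check P)W_{z_1}(P-\check P)'\ge 0$ step, and combine with the minT bound $W_{z_2}\ge S_{z_2}(S_{z_2}'W_{z_2}^{-1}S_{z_2})^{-1}S_{z_2}'$. The ``additional matrix algebra'' the corollary alludes to is then the passage from $PW_{z_1}P'$ to the smaller $PS_{z_1}(S_{z_1}'W_{z_1}^{-1}S_{z_1})^{-1}S_{z_1}'P'$ on the right-hand side. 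Your constrained-GLS idea is a plausible device for this last step, but as written it is only a sketch: the claim that the estimator covariance under the extra equality constraints equals the $Q$-block of the unconstrained one and ``is never larger'' is asserted rather than verified, and the commutation of the $P,Q$ selectors with the Schur-complement operation is precisely the delicate point you flag and then leave open.
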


With stronger assumptions about the data-generating processes, we can give a more concrete theorem. Theorem~\ref{th:1} shows that in the framework of aggregated ARIMA models, the algorithm indeed yields improved forecasts across the entire hierarchy. The assumption allows us to directly compute the covariance matrices, thus coming to a consistent result. The framework of aggregated ARIMA models tells us that this family of models is closed under aggregation. Namely, following \cite{Silvestrini2005TEMPORALAO} we have
\begin{align}\label{eq:agg_arima}
    y\sim\text{ARIMA}(p,d,q)\implies y^{[k]}\sim\text{ARIMA}(p,d,r)\quad\text{with}\quad r\leq\lfloor (p(k-1)+(d+1)(k-1)+q)/k\rfloor.
\end{align}
This is useful for the next theorem and the upcoming simulations, too.

\begin{theorem}\label{th:1}
    Given a temporal hierarchy of jointly covariance-stationary, aggregated ARIMA models with $k\in\{m,1\}$ and a bottom level $\text{AR}(1)$ model, Algorithm~\ref{alg:1} utilizing the minT approach improves overall forecast accuracy based on MSE, i.e.
    \begin{align}\label{eq:mse_ineq}
        \sum_k \text{MSE}_k(\tilde{\mathbf y}_z) \leq 
        \sum_k \text{MSE}_k(\tilde{\mathbf y}_0) \leq 
        \sum_k \text{MSE}_k(\hat{\mathbf y}_0),
    \end{align}
    for any $0 \leq z < m$ with $m>1$ denoting the bottom level frequency. $\hat{\mathbf y}_z$ denotes the updated base forecasts, and $\tilde{\mathbf y}_z$ are the updated and reconciled forecasts. $\text{MSE}_k$ denotes the mean squared error of aggregation level $k$.
\end{theorem}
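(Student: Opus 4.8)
The plan is to reduce both inequalities in \eqref{eq:mse_ineq} to statements about the trace of forecast-error covariance matrices, and then to dispatch the two halves using, respectively, the $z=0$ case of Theorem~\ref{th:general} and the corollary following it, before making the bound concrete through the explicit AR$(1)$ covariance algebra that the aggregated-ARIMA assumption affords. First I would observe that for an AR$(1)$ bottom level the optimal multi-step forecasts are conditionally unbiased, and that by \eqref{eq:agg_arima} each $m$-aggregate is an ARMA$(1,1)$ whose optimal forecasts are unbiased as well; since the minT map satisfies $SGS=S$, reconciliation preserves unbiasedness, so for each of $\hat{\mathbf y}_0$, $\tilde{\mathbf y}_0$, $\tilde{\mathbf y}_z$ we have $\sum_k \mathrm{MSE}_k(\cdot)=\mathrm{tr}\,\mathrm{Cov}(\mathbf y-\cdot)$, with the observed (pruned) coordinates contributing exactly zero. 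This turns \eqref{eq:mse_ineq} into two trace inequalities.

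For the right-hand inequality $\mathrm{tr}\,\mathrm{Cov}(\mathbf y-\tilde{\mathbf y}_0)\le \mathrm{tr}\,\mathrm{Cov}(\mathbf y-\hat{\mathbf y}_0)$ I would use the standard minT identity. Writing $W_0$ for the base-error covariance and using coherence $\mathbf y=SG\mathbf y$ together with $SGS=S$, one gets $\mathbf y-\tilde{\mathbf y}_0=SG(\mathbf y-\hat{\mathbf y}_0)$ and hence $\mathrm{Cov}(\mathbf y-\tilde{\mathbf y}_0)=S(S'W_0^{-1}S)^{-1}S'$. Factoring $W_0=W_0^{1/2}W_0^{1/2}$ and setting $\tilde S=W_0^{-1/2}S$, this equals $W_0^{1/2}\Pi W_0^{1/2}$ with $\Pi=\tilde S(\tilde S'\tilde S)^{-1}\tilde S'$ the orthogonal projector onto $\mathrm{col}(\tilde S)$; since $0\preceq\Pi\preceq I$ we obtain $\mathrm{tr}(W_0^{1/2}\Pi W_0^{1/2})\le\mathrm{tr}(W_0)$, which is the claim. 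As this is exactly the $z=0$ instance ($P_0=I$) of Theorem~\ref{th:general}, it may alternatively be cited directly.

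For the left-hand inequality I would invoke the corollary to Theorem~\ref{th:general}. The same identity gives, on the pruned hierarchy, $\mathrm{Cov}(\mathbf y-\tilde{\mathbf y}_z)=S_z(S_z'W_z^{-1}S_z)^{-1}S_z'$, and the corollary supplies the Loewner ordering of these reconciled covariances for $0\le z$; taking traces yields $\mathrm{tr}\,\mathrm{Cov}(\mathbf y-\tilde{\mathbf y}_z)\le \mathrm{tr}\,\mathrm{Cov}(\mathbf y-\tilde{\mathbf y}_0)$ once one notes that the coordinates pruned at stage $z$ carry zero error under $\tilde{\mathbf y}_z$ but nonnegative error under $\tilde{\mathbf y}_0$. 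To deliver the promised concrete and consistent result I would then compute $W_z$ in closed form: the $h$-step AR$(1)$ error variance is $\sigma^2(1-\phi^{2h})/(1-\phi^2)$, and every cross-horizon and cross-level entry of $W_z$ follows by writing each level's optimal forecast error as a linear combination of the common innovations, with the coarse level forecast from its reduced information set. Conditioning on the $z$ observed bottom values removes the corresponding rows and columns and shortens each surviving horizon by $\lfloor z/k\rfloor$ steps, so substituting into $S_z(S_z'W_z^{-1}S_z)^{-1}S_z'$ exhibits the trace as a decreasing function of $z$ (strictly, whenever $\phi\neq0$), confirming \eqref{eq:mse_ineq} by direct calculation.

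The main obstacle is the bookkeeping that aligns the pruned hierarchy with the full one, since $\tilde{\mathbf y}_z$ and $\tilde{\mathbf y}_0$ live in spaces of different dimension ($m_z$ versus $m_0$ nodes). The delicate points are: verifying that the reduction in \eqref{eq:red}--\eqref{eq:rev_red}, which subtracts already-observed and hence deterministic, zero-variance aggregates, leaves the error covariance of the surviving forecast nodes unchanged; choosing the pruning matrix in the corollary so that the top-level node---which is never observed because $z<m$ and whose gain comes entirely through reconciliation against the updated bottom information---is matched in both covariances and shown to have decreasing variance; and checking that $W_z$ remains positive definite so that the minT map and all inverses are well defined. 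Once these alignment and nondegeneracy points are settled, the two trace inequalities chain together to give \eqref{eq:mse_ineq}.
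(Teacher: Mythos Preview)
Your argument is essentially sound, but it takes a different route from the paper's and misses the one structural fact that makes the AR$(1)$ case clean. The paper's proof opens by invoking \cite{NEUBAUER2024BU} to note that, for a bottom-level AR$(1)$ with the two-level hierarchy $k\in\{m,1\}$, the minT reconciliation \emph{coincides with bottom-up}. Once $\tilde{\mathbf y}_z$ is simply the bottom-up aggregate of the updated bottom-level forecasts, the right-hand inequality in \eqref{eq:mse_ineq} follows from \cite{Koreisha2004UpdatingAP} (the aggregated bottom forecast beats the direct top forecast, and the bottom level is untouched), and the left-hand inequality reduces to two elementary computations: at the top level one writes the aggregated error as $\sigma^2\mathbf 1_m'\Phi\Phi'\mathbf 1_m$ via Lemma~1 of \cite{NEUBAUER2024BU} and replaces $\mathbf 1_m$ by $(\mathbf 1_{m-z}',\mathbf 0_z')'$; at the bottom level one just drops the first $z$ summands in $\sum_{j=1}^m \sigma^2(1-\phi^{2j})/(1-\phi^2)$. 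No inversion of $W_z$, no $S_z(S_z'W_z^{-1}S_z)^{-1}S_z'$, and no appeal to the corollary is needed.

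By contrast, you route both inequalities through the abstract trace machinery of Theorem~\ref{th:general} and its corollary, and only afterwards offer the AR$(1)$ covariance algebra as a confirmation. This is valid---and indeed your argument would establish the inequalities for hierarchies and base models well beyond AR$(1)$---but it leans on the corollary (which the paper asserts without proof) and forfeits the explicit, closed-form character that is the point of Theorem~\ref{th:1}. The paper's bottom-up reduction is what turns a potentially messy matrix computation into two one-line monotonicity checks; your plan to ``compute $W_z$ in closed form'' and substitute into $S_z(S_z'W_z^{-1}S_z)^{-1}S_z'$ would work but is considerably heavier, especially because the top-level base forecast uses only the aggregated information set, so the cross-level block of $W_z$ is not an immediate innovation sum.
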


The proof of Theorem~\ref{th:1} can be found in Appendix~\ref{app:proof}. 

\begin{remark}
    Without further assumptions, it is unclear how the aggregated error of the updated base forecasts $\hat{\mathbf y}_z$ relates to the series of inequalities in \eqref{eq:mse_ineq}, especially to $\sum_k \text{MSE}_k(\tilde{\mathbf y}_0)$. This is because we simultaneously have
    \begin{align*}
        \text{MSE}_1(\hat{\mathbf y}_z) &\leq \text{MSE}_1(\hat{\mathbf y}_0) = \text{MSE}_1(\tilde{\mathbf y}_0),\quad\text{and}\\
        \text{MSE}_m(\hat{\mathbf y}_z) &= \text{MSE}_m(\hat{\mathbf y}_0) \geq \text{MSE}_m(\tilde{\mathbf y}_0).
    \end{align*}
\end{remark}

The statement of Theorem~\ref{th:1} can quickly be extended to the following corollary.
\begin{corollary}\label{clry:1}
    In the setting of Theorem~\ref{th:1} and $0<z_1\leq z_2<m$ we have that
    \begin{align}
        \sum_k\text{MSE}_k(\tilde{\mathbf y}_{z_2}) \leq \sum_k\text{MSE}_k(\tilde{\mathbf y}_{z_1}).
    \end{align}
\end{corollary}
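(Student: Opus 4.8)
The plan is to reduce the claimed scalar inequality to the matrix (Loewner) ordering already recorded in the corollary following Theorem~\ref{th:general}, and then pass to traces. The first step is to rewrite both sides of the inequality as traces of reconciled error covariance matrices on the pruned hierarchies. In the two-level setting $k\in\{m,1\}$ of Theorem~\ref{th:1}, after Step~4 of Algorithm~\ref{alg:1} the error at every observed (pruned) node is identically $0$, while the error at each retained node of the full hierarchy coincides with the error of the corresponding node of the pruned hierarchy: the reverse reduction in Eq.~\eqref{eq:rev_red} shifts a retained forecast by a known (exactly observed) constant and hence leaves its error unchanged, and symmetrically the reduction in Eq.~\eqref{eq:red} does not alter errors on retained nodes. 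Since the updated base and reduced top forecasts are conditional expectations (unbiased) and minT preserves unbiasedness, each $\text{MSE}_k$ equals the summed forecast-error variances at level $k$, so that
\begin{align*}
\sum_k \text{MSE}_k(\tilde{\mathbf y}_z) = \text{tr}\,\text{Cov}\big(P_z(\mathbf y-\tilde{\mathbf y}_z)\big) = \text{tr}\big[S_z (S_z'W_z^{-1}S_z)^{-1}S_z'\big],
\end{align*}
where the last equality is the standard minT covariance identity applied to the pruned hierarchy, valid because the minT constraint forces $S_zG_zS_z=S_z$ and hence $G_zS_z=I$ on $S_z$.

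Next I would invoke the covariance form of the corollary following Theorem~\ref{th:general}, which for $z_1\le z_2$ gives
\begin{align*}
S_{z_2}(S_{z_2}'W_{z_2}^{-1}S_{z_2})^{-1}S_{z_2}' \;\preceq\; P\,S_{z_1}(S_{z_1}'W_{z_1}^{-1}S_{z_1})^{-1}S_{z_1}'\,P',
\end{align*}
with $P$ the $m_{z_2}\times m_{z_1}$ pruning matrix selecting the nodes retained when passing from $z_1$ to $z_2$. Taking traces — trace is monotone on the Loewner cone — and using cyclicity, the right-hand side becomes $\text{tr}\big[P'P\, S_{z_1}(S_{z_1}'W_{z_1}^{-1}S_{z_1})^{-1}S_{z_1}'\big]$. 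Because $P$ selects a subset of coordinates, $P'P$ is a diagonal $0/1$ projection with $P'P\preceq I$; as $S_{z_1}(\cdots)^{-1}S_{z_1}'$ is positive semidefinite, dropping the projection only discards nonnegative diagonal entries, so this trace is at most $\text{tr}\big[S_{z_1}(\cdots)^{-1}S_{z_1}'\big]=\sum_k \text{MSE}_k(\tilde{\mathbf y}_{z_1})$. Chaining the three displays yields exactly $\sum_k\text{MSE}_k(\tilde{\mathbf y}_{z_2})\le\sum_k\text{MSE}_k(\tilde{\mathbf y}_{z_1})$.

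The hard part is not the trace manipulation but the reduction in the first paragraph: justifying cleanly that the full-hierarchy MSE collapses to the pruned-hierarchy trace. This demands verifying that the reduction/reverse-reduction steps of Eqs.~\eqref{eq:red}--\eqref{eq:rev_red} neither perturb the errors on retained nodes nor introduce bias, and that $S_z$ genuinely satisfies the minT constraint so that the covariance identity $\text{Cov}(\mathbf y-\tilde{\mathbf y})=S_z(S_z'W_z^{-1}S_z)^{-1}S_z'$ transfers to the pruned problem. An alternative, self-contained route more in the spirit of Theorem~\ref{th:1} is to bypass the corollary and instead recompute $W_{z_1}$ and $W_{z_2}$ directly from the aggregated $\text{AR}(1)$ autocovariances (using Eq.~\eqref{eq:agg_arima}) and compare the resulting traces term by term; this is more laborious but re-establishes the Loewner ordering in the special case without appealing to the general statement.
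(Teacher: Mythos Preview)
Your proposal is correct but takes a different route from the paper. The paper does not spell out a proof of Corollary~\ref{clry:1}; it simply remarks that Theorem~\ref{th:1} ``can quickly be extended,'' and the intended argument is to rerun the explicit $\text{AR}(1)$ computations from the proof of Theorem~\ref{th:1} with $z_1$ in place of $0$ and $z_2$ in place of $z$: at the bottom level $\frac{\sigma^2}{1-\phi^2}\sum_{j=1}^{m-z_1}(1-\phi^{2j})\ge\frac{\sigma^2}{1-\phi^2}\sum_{j=1}^{m-z_2}(1-\phi^{2j})$ because each summand is nonnegative, and at the top level the same quadratic-form step through $\Phi\Phi'$ used in Eqs.~\eqref{eq:var_lemma}--\eqref{eq:var_lemma_red} goes through verbatim. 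You instead lift the comparison to the Loewner ordering of reconciled covariances on the pruned hierarchies (the corollary after Theorem~\ref{th:general}) and then pass to traces via $P'P\preceq I$. This is more structural and would generalize beyond the $\text{AR}(1)$ setting to any model where that Loewner inequality is available, at the price of leaning on a result the paper states without proof. Your own closing remark already identifies this trade-off: the ``more laborious'' direct recomputation you mention as an alternative is exactly what the paper has in mind, and in the two-level $\text{AR}(1)$ case it is in fact the shorter path.
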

Corollary~\ref{clry:1} shows that every step of new information in the hierarchy will result in an overall lower MSE, indicating an improvement of the forecasts in total.

\section{Experiments}\label{sec:exps}

To provide a detailed illustration of Theorems~\ref{th:general} and \ref{th:1} we conduct several simulation studies as outlined below. 
Following \cite{NEUBAUER2024BU} we simulate random, stationary $\text{ARMA}(p,q)$ models of different complexity on the lowest level of the hierarchy and aggregate the resulting realizations to obtain realizations on each level of the hierarchy. Controlled parameters are the following.
\begin{itemize}
    \item Auto-regressive order $p=p_\text{bot}\in\{0,1,2\}$,
    \item Moving-average order $q=q_\text{bot}\in\{0,1,2\}$,
    \item Aggregation hierarchy $k\in\{4,1\},\{12,1\},\{12,3,1\},\{360,12,1\}$, and
    \item Base model selection automatically by AICc or using fixed and correctly specified orders (using Eq.~\eqref{eq:agg_arima}).
\end{itemize}

The top level sample size is fixed to $n_\text{top}=100$ and the innovation variance on the bottom level is set to $\sigma^2=1$. Additionally, we vary the "new data steps" $z$ based on the bottom level of the hierarchy. Specifically, for $k\in\{4,1\}$ we use $z\in\{0,1,2,3\}$ to simulate all possible new data scenarios for this type of hierarchy. For $k\in\{12,1\},\{12,3,1\}$ we have $z\in\{0,1,\dots,11\}$. The case of $k\in\{12,3,1\}$ is noteworthy because the middle level time series base forecasts may also be updated once $z=3,6,9$ as indicated in Eq.~\eqref{eq:hat_vec_upd}. As similar situation occurs for $k\in\{360,12,1\}$. In every setting, we simulate $50$ repetitions and summarize the results accordingly.

The reconciliation methods we consider in this simulation study are 
\begin{itemize}
    \item Bottom-Up: Aggregating forecasts from the lowest levelup to each higher level of the hierarchy serves as a simple baseline method.
    \item Full Cov.: In the minimum trace approach of \cite{wick:opt_fc_recon}, we estimate the complete covariance matrix of the base forecast errors and use it to calculate the optimal mapping matrix $G$.
    \item Cov. Shrinkage: A covariance shrinkage is used to shrink the estimated covariance matrix towards its diagonal matrix (\cite{wick:opt_fc_recon}). The shrinkage parameter is chosen by a cross-validation procedure.
\end{itemize}

To evaluate the results we use the root mean squared error (RMSE) summarising the reconciliation performance for each method and new data step $z$ on each level of the hierarchy of interest in a single number, and put in relation to the RMSE value of the corresponding base forecast, resulting in a relative RMSE (rRMSE) as given in Eq.~\eqref{eq:rrmse}.

\begin{align}\label{eq:rrmse}
    \text{rRMSE}^{[k]}_z(\tilde{\mathbf y}, \hat{\mathbf y}; \mathbf y) = \sqrt{\frac{\sum_i \left(y^{[k]}_i-\tilde y^{[k]}_{i|z}\right)^2}{\sum_i \left(y^{[k]}_i-\hat y^{[k]}_{i}\right)^2}}.
\end{align}

To obtain an overall error measure for an entire hierarchy, we average over each level, thus
\begin{align}
    \text{rRMSE}_z(\tilde{\mathbf y}, \hat{\mathbf y}; \mathbf y) = \frac{1}{|K|}\sum_{k\in K}\text{rRMSE}^{[k]}_z(\tilde{\mathbf y}, \hat{\mathbf y}; \mathbf y).
\end{align}

Since the reconciled forecasts include observed data points for $z>0$, using them for comparison would be unfair, as fewer errors would be considered in the error measures. Therefore, we chose to retain the errors generated when $z=0$. Let the vector $\tilde{\tilde y}_{i|z}^{[k]}$ be
\begin{align}
    \tilde{\tilde y}_{i|z}^{[k]} = \begin{pmatrix}
        \left(\tilde{\mathbf y}_{i|0}^{[k]}\right)_{1,\dots,\lfloor z/k\rfloor}\\
        \tilde{\mathbf y}_{i|z}^{(r)[k]},
    \end{pmatrix}
\end{align}
where we stack the first $\lfloor z/k\rfloor$ reconciled forecast at $z=0$ together with the reconciled forecasts at the current $z$. This approach ensures a fair comparison by incorporating the errors from the original reconciliation step. In the context of Theorem~\ref{th:general}, this adjusted error measure implies that the bottom level difference of errors where base forecasts can be updated is less significant while top level improvements remain unaffected.

The training data includes the first $n_{top}-1$ data points corresponding to the top level of the hierarchy, while the last top level observation is held out for testing. As a result, we focus on one-step ahead forecasts and assess how new data from the lower levels of the hierarchy can influence these forecasts. This approach reflects a practical application frequently encountered in real world data analysis.

We only show results for selected combinations of $p$ and $q$ since the effect of the moving average part of the model does not seem relevant for this analysis. All automatically selected models are modeled using the \texttt{auto.arima} function in the \pkg{forecast} package (\cite{forecast-pkg}) in \proglang{R}. The forecast updating algorithm is available from the \pkg{FTATS} package (\cite{FTATS-pkg}) which is also used to generate the simulation data.

The analysis is split into several parts to be able to focus on the essential aspects. We look into the changes on the top level of the hierarchy, followed by analysing the lower levels in a separate step.

\subsection{Improvements of the Top Level}
First, we take a look at the improvements of the base forecasts on the top level of the hierarchy only. Figure~\ref{fig:top_41} shows training and test relative errors for the simple hierarchy with $k\in\{4,1\}$ and automatically selected models to also include possible model misspecification. We observe that in-sample the improvements are significantly present whereby the improvements increase with the model complexity. While we use automatically selected models, there is no difference in the updating  since the data-generating models are still fully correctly specified. The picture is not as clear for the test relative improvements. There is much more uncertainty involved and clear improvements are only visible for $z\geq 2$.

\begin{figure}[!ht]
    \centering
    \includegraphics[width=1\textwidth]{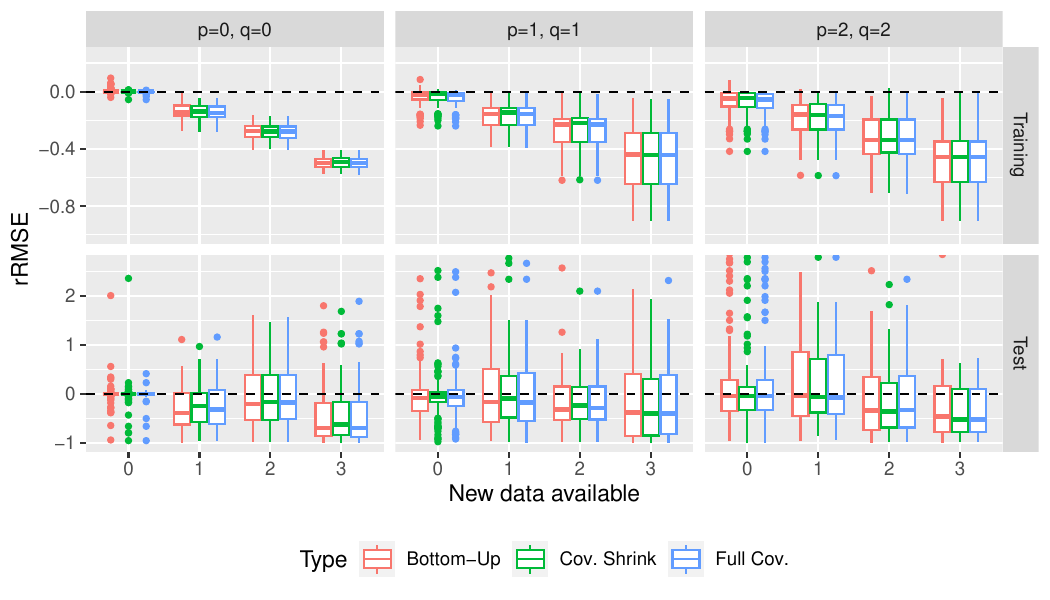}
    \caption{Training and Test rRMSE values of the top level for $k\in\{4,1\}$, and various ARMA data-generating models and automatically selected fitted models.}
    \label{fig:top_41}
\end{figure}

When fitting the correctly specified models, the conclusions remain the same. The reduction in uncertainty is mainly present through an overall tighter picture of boxplots. Thus, the updating algorithm is also capable of handling at least some model misspecification and still yields improved forecasts.

Once the hierarchy size is increased heavily, we do not show boxplots anymore. Figure~\ref{fig:top_360} shows the median relative improvements, augmented with a least squares regression line, for $k\in\{360,12,1\}$. In contrast to Figure~\ref{fig:top_41} we do not see much difference in improvements for the training errors with respect to the model complexity. This large hierarchy also presents a common problem in that the estimate of the full covariance matrix ends up being singular for most new data points, apart from rather small pruned hierarchies where $z>320$. Similarly, the test errors show median worsening until $z\approx 30$. For larger values of $z$ the improvements are positive. Overall, the improvements are increasing in $z$. As before, the methods do not yield very different results, except for the full covariance. Using this approach yields even better improvements for very large values of $z$. Interestingly, due to a much denser grid of new data points, the level of improvement can be much larger, compared to the smaller hierarchy of Figure~\ref{fig:top_41}. The inclusion of the middle level in the hierarchy does not impact the improvements of the top level at all. Results for the moderately big hierarchy with $k\in\{12,3,1\}$ look similar and are therefore not shown here.

\begin{figure}[!ht]
    \centering
    \includegraphics[width=1\textwidth]{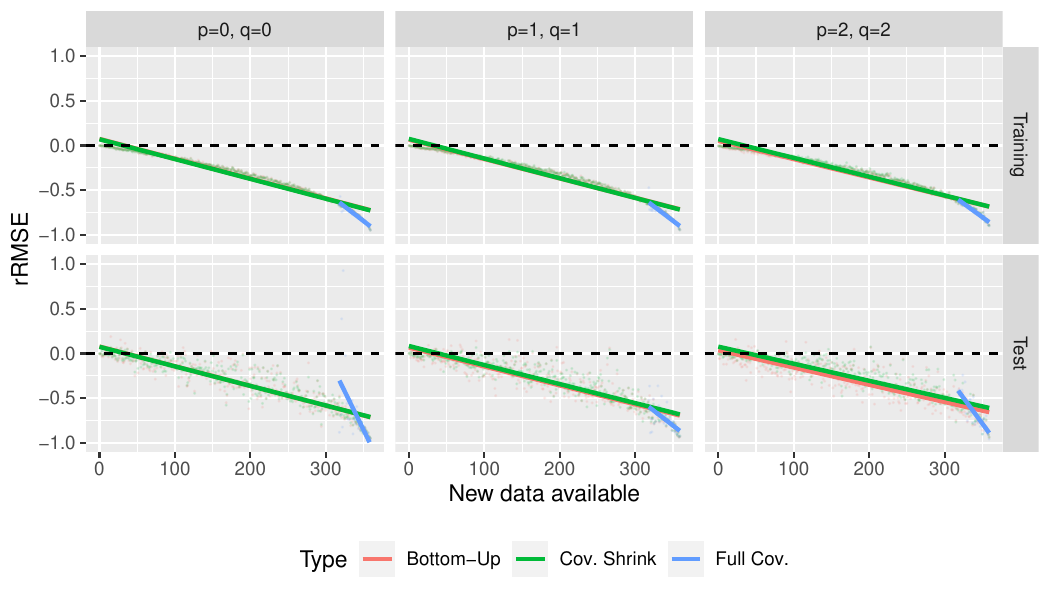}
    \caption{Training and Test rRMSE values of the top level for $k\in\{360,12,1\}$, and various ARMA data-generating models and automatically selected fitted models.}
    \label{fig:top_360}
\end{figure}

\subsection{Changes of the Bottom Levels}
Next, we analyze how the hierarchical forecast updating algorithm using forecast reconciliation affects the already updated base forecasts. Figure~\ref{fig:bot_41} shows the relative errors of the reconciliation methods as well as the updated base forecasts. Apart from the special case of a random walk, the conclusions are that the covariance-based reconciliation methods do not seem to alter the updated base forecasts for any new data time step $z$. Hence, as suggested in \cite{NEUBAUER2024BU} there are no improvements to have in the bottom level of the hierarchy, and the overall improvements are driven by the top level. In the case of $p,q=0$ we observe the strength of using the full covariance matrix estimate in the reconciliation step as this yields significantly higher improvements compared to any other approach, both on training and test data.

\begin{figure}[!ht]
    \centering
    \includegraphics[width=1\textwidth]{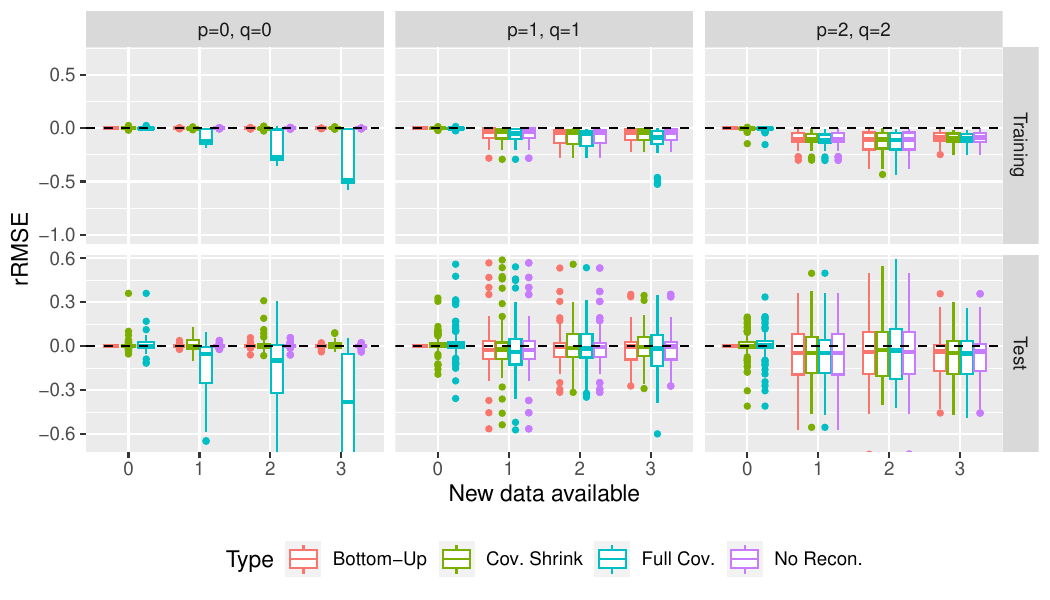}
    \caption{Training and Test rRMSE values of the bottom level for $k\in\{4,1\}$, and various ARMA data-generating models and automatically selected fitted models.}
    \label{fig:bot_41}
\end{figure}

For completeness, the very same analysis for the middle level of the hierarchy with $k\in\{12,3,1\}$ is available in Appendix~\ref{app:plots}. We observe a similar behavior as in Figure~\ref{fig:bot_41} and the random walk case. While the bottom-up approach, as well as the shrunk covariance-based minT reconciliation method, lead to no improvements of the base forecasts on the middle level, this drastically changes when considering the full covariance matrix estimate. By using the entire covariance information available, the improvements on the middle level are significantly larger on both training and test sets. Interestingly, the improvements on the top level of the hierarchy are not affected at all by this. As a consequence, this leads to better improvements on the overall level.

\subsection{Conclusions}
In this simulation study, we investigate the proposed hierarchical forecast updating algorithm using forecast reconciliation in the framework of aggregated ARIMA models. We simulated various ARMA models as well as explored a variety of hierarchies. To summarise the findings of this simulation study, we want to highlight the following points.
\begin{itemize}
    \item The model complexity and the model misspecification do not play a big factor in the performance of the updating algorithm.
    \item For simple hierarchies, all types of reconciliation lead to the same rates of improvement.
    \item The more complex the hierarchy, the higher the potential is that the reconciliation method used leads to better or worse results.
    \item As in regular forecast reconciliation, the overall improvements are driven by the top level improvements.
    \item Updated base forecasts usually remain more or less untouched by the reconciliation-based updating procedure.
\end{itemize}

\section{Real Data Applications}\label{sec:real_apps}
We demonstrate the paper's methodology on the following selected datasets suitable for hierarchical forecast updating. Here, we focus on datasets coming from the energy sector. 

Examples of recent research on applying forecast reconciliation methods in the energy sector include works of \cite{NYSTRUP2020876} and \cite{NYSTRUP20211127}, as well as \cite{DiModica_online_fcrcon}, \cite{BERGSTEINSSON2021116872}, \cite{LEPRINCE2023121510}, \cite{MOLLER2024515}. The increasing use of forecast reconciliation in this field, where accurate forecasts are crucial, justifies our focus on these datasets.

The aggregation step is done by using the \texttt{tsaggregates} function in the \pkg{thief} package (\cite{thief-pkg}) in \proglang{R}.

\subsection{Energy Generation}
Following the example of \cite{PANAGIOTELIS2023693} we consider electricity generation data of Australia measured on a daily basis for June $2019$ to May $2020$. The data is readily available from the \pkg{FTATS} package (\cite{FTATS-pkg}) in \proglang{R}. Each time series in this dataset corresponds to a certain type of generation and the corresponding amounts such as \textit{Solar} or \textit{Coal}. In total, there are $23$ bottom level time series. To obtain a setting of temporally hierarchical time series, we aggregate each daily time series into weekly (Level $2$) and then monthly data (Level $1$). To simplify, we assume that each month consists of $28$ days, i.e., we are interested in $1$,$7$, and $28$ days ahead forecasts. This yields an aggregation scheme of $k\in\{28,7,1\}$ and corresponding lengths of $336$ days, $48$ weeks and $12$ months. The assumed frequencies are then $M_k\in\{1,4,28\}$. However, we do not model any seasonalities as they did not turn out to lead to better results. As in the simulation studies, we leave out the last month to test the generalizability of the forecast updating procedure.

The base models used to model each level of each time series are automatically selected ARIMA models utilizing the \texttt{auto.arima} function of the \pkg{forecast} package in \proglang{R}. We focus on the bottom-up approach as well as using the shrunk covariance-based method for the reconciliation step of the updating algorithm. The full covariance method is excluded due to estimation problems.

\begin{figure}[!ht]
    \centering
    \includegraphics[width=1\textwidth]{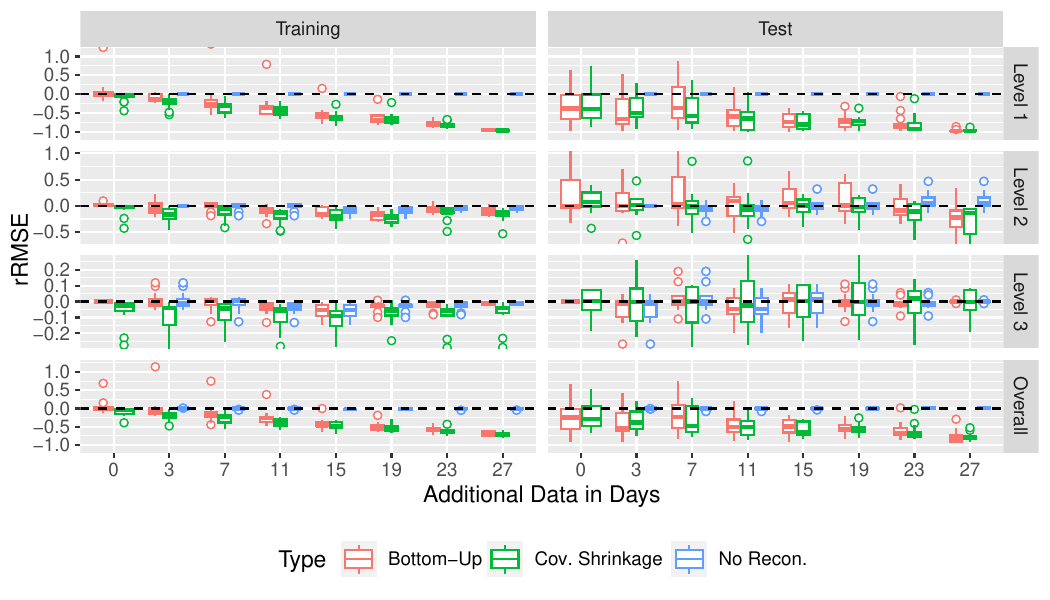}
    \caption{Training and Test rRMSE values of the Energy Generation dataset with automatically selected fitted models.}
    \label{fig:energy}
\end{figure}

Figure~\ref{fig:energy} shows training and test relative RMSE values for a selection of new data steps as well as reconciliation methods. On both the training and test set, clear trends are present. Most improvements can be achieved on the top level, whereas the change of forecasts on the lower levels remains quite small. Still, we observe that the bottom-up approach yields useful results as the difference to using the shrunk covariance in the minT reconciliation method is almost non-existing. A notable aspect of the bottom-up approach is the high variability on the test month, especially on a weekly (Level $2$) and monthly (Level $1$) basis. This is probably due to model misspecification usually observed on real data. This is again a case where a more sophisticated reconciliation method performs more reasonably. The minT approach with the full covariance matrix estimate can not be used here because of its singularity and, hence, is not shown.

\begin{figure}[!ht]
    \centering
    \includegraphics[width=1\textwidth]{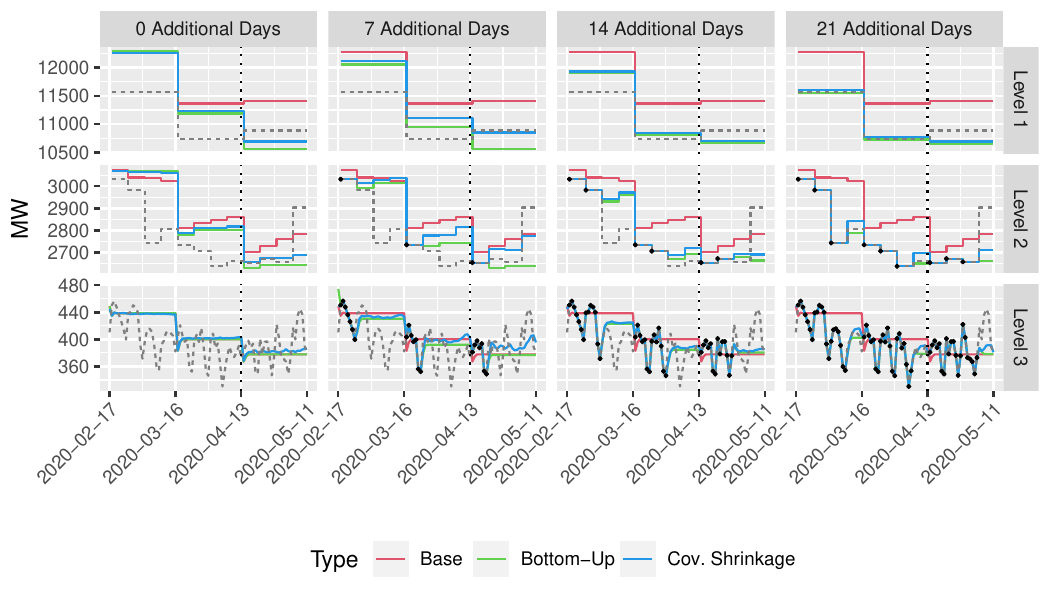}
    \caption{Results of the Non-Renewables Time Series of the Energy Generation dataset. The dashed line represents the observed data while the black dots correspond to the new data points used for forecast updating. The dotted vertical line displays the split of training and test sets.}
    \label{fig:energy_example}
\end{figure}

Actual results for the last $3$ months of the \textit{Non-Renewables} time series of the dataset are shown in Figure~\ref{fig:energy_example}. Each column corresponds to a number of new data points on a daily basis. The solid black dots show the new data points available to perform the updating algorithm. On the training set, these are used to train the models and reconciliation method accordingly (naturally, all training data points are available in total), while on the test set, they are solely used to update the models as one would do in a real application.  We observe the effect of the forecast updating procedure as, at every level, the updated base forecasts are adjusted accordingly, resulting in more accurate forecasts. The differences between the different methods are yet again very small, especially for larger values of $z$.

\subsection{Solar Power}
A second dataset used to illustrate the forecast updating procedure is the \textit{Solar Power} data analyzed by \cite{panamtash2018coherent}, which is available from \url{https://www.nrel.gov/grid/solar-power-data.html}. A processed dataset can also be found in the \proglang{R} package \pkg{FTATS} (\cite{FTATS-pkg}). In the dataset, the generation amounts of many simulated photovoltaic (PV) power plants in the United States on a $5$ minute basis are captured for the entire year of $2020$. In this analysis, we focus on $5$ states, namely Alabama, Arkansas, Arizona, California, and Colorado. In each state, we group the PV plants by their capacity, resulting in a total of $78$ bottom level time series. Further, we restrict the time series to January to keep it computationally feasible. As before, we leave out the last day of January as a test set.

The temporal hierarchy used to analyze the forecast updating procedure is defined by $k\in\{288, 12, 1\}$. This corresponds to a $5~\text{minute}$ (Level $3$) to $1~\text{hour}$ (Level $2$) to $1~\text{day}$ (Level $1$) temporal aggregation with frequencies of $M_k\in\{1,24,288\}$.

To properly model the seasonalities in the $5~\text{minute}$ and hourly time series, we use Fourier regressors of order $2$ using the \texttt{fourier} function in the \pkg{forecast} package. We do not use any regressors for the daily time series. Naturally, the PV generation numbers can be equal to $0$, especially on $5~\text{min}$ and hourly data. Hence, we log-transform (i.e. $x\mapsto\log(1+x)$ due to having $0$'s in the data) the corresponding data and apply the base models to them to handle this non-negativity. The resulting base forecasts are put back to their original scale. The base models are again automatically selected ARIMA models applying the \texttt{auto.arima} function.

To obtain non-negative reconciled forecasts, we apply the \textit{set-negative-to-zero} heuristic proposed by \cite{DIFONZO202313}. The authors use this heuristic in the solar context as well and argue little differences compared to constrained numerical optimization methods such as \cite{wickramasuriya2020optimal}. Still, using this heuristic leads to non-unbiased reconciled forecasts. The idea is to start at the bottom level, set negative forecasts to $0$, and aggregate up the differences to its neighboring higher level. This is repeated recursively until the top level is reached.

\begin{figure}[!ht]
    \centering
    \includegraphics[width=1\textwidth]{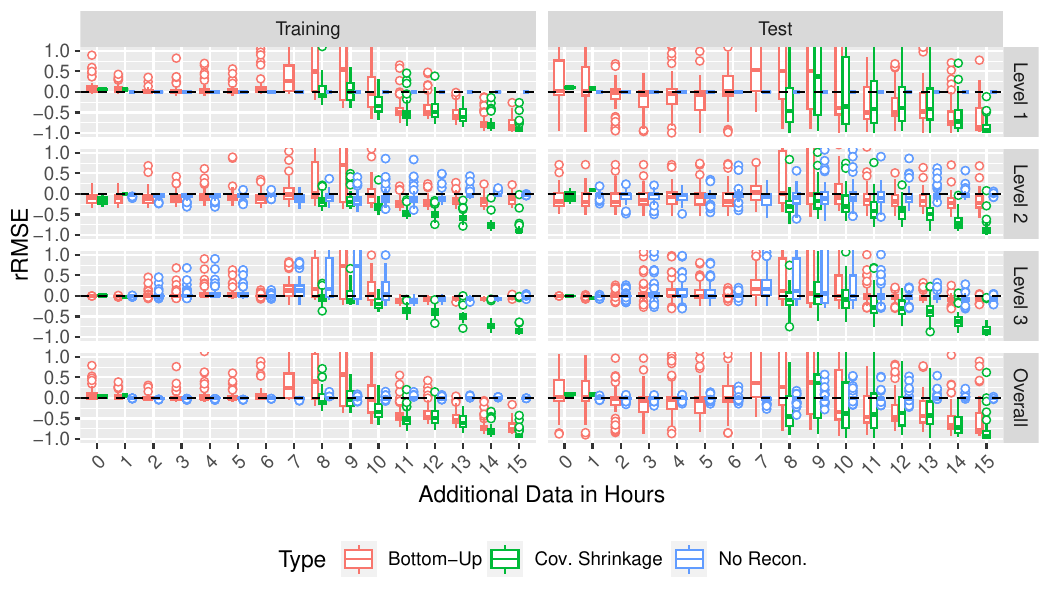}
    \caption{Training and Test rRMSE values of the Solar Power dataset with automatically selected fitted models.}
    \label{fig:pv}
\end{figure}

Figure~\ref{fig:pv} shows training and test relative errors of the bottom-up approach, the shrunk covariance-based minT reconciliation, as well as using updated forecasts and performing the reconciliation step. We have to leave out using the full covariance matrix estimate as previously observed due to rank issues.

We observe multiple interesting aspects here. First, up until $6$ hours of new data, i.e., $00:01$ up to $06:00$, the updating procedure results in minor changes in forecasts. This is due to the corresponding time series being equal to $0$; at night time the PV plants are unable to produce any power. As a result, the new data do not provide useful information. Interestingly, with some exceptions the shrunk covariance-based method fails completely in this time range. An detailed analysis reveals that not only the complete covariance matrix estimate is singular but also its diagonal matrix of variances is singular. Consequently, any linear combination as performed when shrinking is also singular, preventing the reconciliation transformation from being executed. A major reason for these near $0$ variances, especially for longer forecast horizons on the lower levels of the hierarchy, is the constant forecasts. Therefore, a reconciliation step is needed where the covariance matrix is estimated in a more reasonable manner.

Once useful new data is available, we observe clear differences between the reconciliation approaches. While the bottom-up-based updated forecasts result in high relative errors indicating worsening of the base forecasts, the covariance-based method performs significantly better. This is because the updated base forecasts overshoot once new useful data is available. The more sophisticated covariance-based method effectively manages this overshoot. However, once most of the new useful data has been observed, both methods ultimately yield similar results.

Overall, applying the proposed hierarchical forecast updating algorithm using forecast reconciliation leads to significant improvements across all levels and, thus, also on an overall scale.

\begin{figure}[!ht]
    \centering
    \includegraphics[width=1\textwidth]{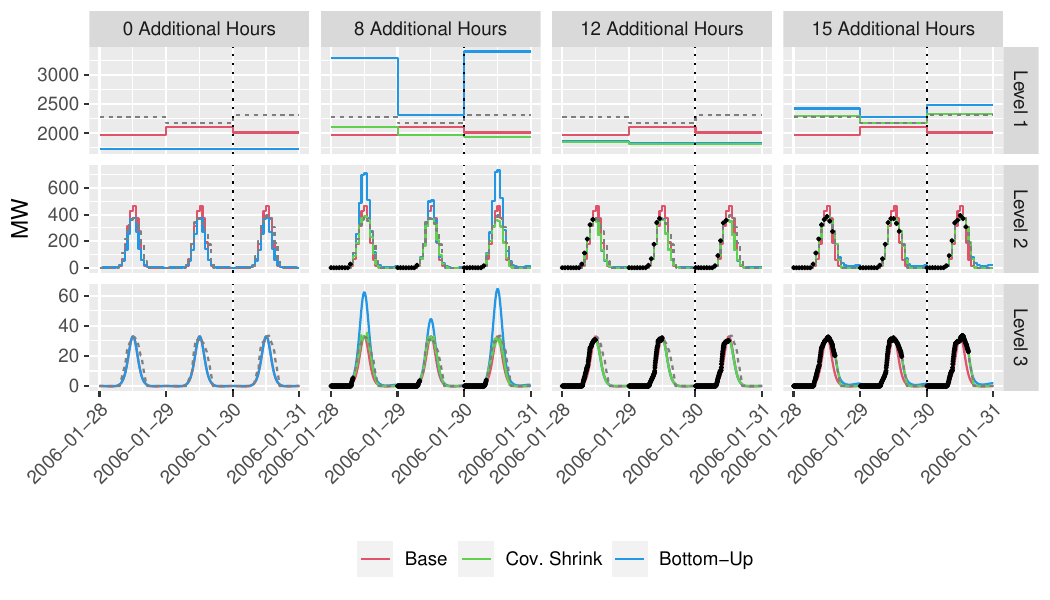}
    \caption{Results of the Californian $8\text{MW}$ Time Series of the Solar Power dataset. The dashed line represents the observed data while the black dots correspond to the new data points used for forecast updating. The dotted vertical line displays the split of training and test sets.}
    \label{fig:pv_example}
\end{figure}

Figure~\ref{fig:pv_example} shows the last $3$ days of the \textit{Californian $8\text{MW}$} time series with multiple new data steps in hours. In this example the overshoot of the updated base forecasts and thus also of the bottom-up updated forecasts is very present, especially for $8$ hours of additional data. The updating algorithm using a non-bottom-up approach acts as a way to correct the overshooting of the updated base forecasts on lower levels and, simultaneously, improve forecast accuracy on the top level, as also seen in the overview plot of Figure~\ref{fig:pv}.

\section{Conclusions}\label{sec:concl}
In this paper, a novel framework for hierarchical forecast updating is presented. Assuming a temporal hierarchical structure of time series such as monthly, quarterly, and annual data, the procedure makes use of hierarchical forecast reconciliation, a trending topic within the forecasting community. This approach allows for the integration of new data updates not only into the specific time series but also throughout the entire hierarchy of time series, thereby enhancing the overall forecasts.

To overcome possible issues of having singular covariance matrices or having to heavily constrain the optimization problem on which forecast reconciliation is built, a hierarchy pruning step is proposed. On the pruned hierarchy the reconciliation transformation is used to obtain updated and coherent forecasts for the entire hierarchy. 

The presented framework is very flexible allowing almost any family of base models, and supports many types of common covariance matrix estimators used in temporal hierarchical forecast reconciliation. By having the possibility to supply any covariance matrix estimate, any type of covariance-based forecast reconciliation approach may be used.

In our theoretical analysis, we show that the algorithm leads to improvements in forecast accuracy based on the mean squared error. This improvement is further validated through extensive simulation studies. Additionally, we apply the proposed algorithm to two real data sets from the energy sector. Both data sets confirm the viability and usefulness of the updating algorithm, providing significant support to the practical forecasting community.

Due to the high availability of data today, with much of it accessible at a high frequency, constructing a temporal hierarchy is now more cost-effective than ever. This is true even when the higher frequency time series are not of primary interest. As demonstrated, utilizing the temporal hierarchy along with an updating framework leads to significant improvements at the top level, specifically for the time series and frequency that are of interest. Therefore, the novel approach presented offers a clear and effective method to enhance time series forecasts.

There are still some aspects to consider. The algorithm should not be limited to temporal hierarchies alone. It can also be extended to include cross-sectional or even cross-temporal hierarchies. This extension is particularly useful in many cross-sectional applications. For instance, when dealing with different regions, one region may report data earlier than another. The updating algorithm aids in incorporating this partially observed data into the entire hierarchy accordingly.

It is important to investigate how the reconciliation transformation evolves over time. This is especially relevant for larger hierarchies, as there may be more efficient ways to perform the reconciliation step between different time steps.

One possible extension of our work is the choice of aggregation function. In this paper, we mainly focused on using the sum; however, in practical applications, other forms of aggregation may be more appropriate, such as stock aggregation or the median. To implement these alternatives, adjustments would need to be made to both the pruning and reconciliation steps.

To ensure robustness, it's important to investigate the impact of any unusual new data. This may significantly alter the initial forecasts, which can subsequently lead to substantial changes in the reconciled forecasts at higher levels of the hierarchy. The updating framework should be designed to manage these situations effectively and minimize sensitivity to such changes. For this, the use of probabilistic forecast reconciliation in order to measure uncertainty might be helpful.

\section*{Computational details}
The simulations and data examples were carried out in \textbf{R} 4.3.0. The corresponding source code of this paper, as well as the implementation of the algorithm, is available from GitHub at \url{https://github.com/neubluk/FTATS}.


\section*{Acknowledgments and Disclosure of Funding}
We acknowledge support from the Austrian Research Promotion Agency (FFG), Basisprogramm project “Meal Demand Forecast” and Schrankerl GmbH for the cooperation and access to their data. We further acknowledge funding from the Austrian Science
Fund (FWF) for the project “High-dimensional statistical learning: New methods to
advance economic and sustainability policies” (ZK 35), jointly carried out by WU
Vienna University of Economics and Business, Paris Lodron University Salzburg, TU
Wien, and the Austrian Institute of Economic Research (WIFO). 

\clearpage

\section{Appendix}
\subsection{The notation of Algorithm~\ref{alg:1} in detail}\label{app:alg_not}
Here, we want to manually calculate step $2$ of the Algorithm~\ref{alg:1} for $k\in\{12,3,1\}$ and $z=7$. This implies new data steps of $\lfloor z/k\rfloor \in\{0,2,7\}$. Let $k=3,u=3$. Thus, we want to compute the reduced updated third quarter of the hierarchy. Then, 
\begin{align*}
    \hat{ y}_{4(i-1)+3|z}^{[3](r)} = \hat{ y}_{4(i-1)+3|z}^{[3]} - \sum_{w=7}^7 { y}_{12(i-1)+w|z}^{[1]}.
\end{align*}
For $k=12,u=1$ we calculate the reduced annual forecast to be
\begin{align*}
    \hat y_{i|z}^{[12](r)} &= \hat y_{i|z}^{[12]} - \sum_{w=1}^2  y_{4(i-1)+w|z}^{[3]} - \sum_{w=7}^7  y_{12(i-1)+w|z}^{[1]} \\
    &= \hat y_{i|z}^{[12]} - \sum_{w=1}^7 \hat y_{12(i-1)+w|z}^{[1]},
\end{align*}
where both Eqs.~\eqref{eq:red},\eqref{eq:red_alt} were used.

\subsection{Proofs}\label{app:proof}
\begin{proof}[Proof of Theorem~\ref{th:general}]
    The covariance matrix of the updated and pruned base forecasts is given by $W_z=\text{Cov}(\mathbf y_z - \hat{\mathbf y}_z)$ of dimension $m_z\times m_z$. The pruned vectors are given by 
    \begin{align*}
        \mathbf y_z &= P_z\mathbf y-R_z\mathbf y\\
        \hat{\mathbf y}_z &= P_z\hat{\mathbf y}_{\cdot|z} - R_z\mathbf y,
    \end{align*}
    where $P_z$ is the $m_z\times m_0$ pruning matrix which is used to remove the corresponding nodes. $R_z$ is the reduction matrix used to adapt the pruned hierarchy as described in step $2$ of the procedure. Thus, we can write
    \begin{align*}
        W_z &= P_z\text{Cov}(\mathbf y-\hat{\mathbf y}_{\cdot|z})P_z' \\
            &= \check P_z W \check P_z',
    \end{align*}
    where $W=\text{Cov}(\mathbf y-\hat{\mathbf y})$ denotes the original base forecasts. We exploit the fact that the forecast horizons decrease after updates. Therefore, due to stationarity, we can use the original base forecast error covariance matrix and eliminate the corresponding entries using the $\check P_z$ matrix. The matrix $\check P_z$ also acts as a similar pruning matrix given by
    \begin{align*}
        \check P_z = \text{diag}(I_{m/k-\lfloor z/k\rfloor}~0_{m/k-\lfloor z/k\rfloor\times\lfloor z/k\rfloor},k\in K).
    \end{align*}

    Next, we can compare the covariance matrices of interest. We have 
    \begin{align*}
        P_z W P_z - \check P_z W \check Pz &= (P_z - \check P_z) W (P_z - \check P_z)' \\
        &= (W^{1/2}(P_z - \check P_z)')' (W^{1/2} (P_z - \check P_z)') \\
        &\geq 0.
    \end{align*}
    Hence, the difference of covariances matrices is positive semi-definite. Note that $\text{rk}(P_z-\check P_z)<m_z$. This also signifies that the updated base forecasts cannot independently improve forecasts at other levels. However, using the theory of the minT reconciliation method (see \cite{wickramasuriya2021propertiespointforecastreconciliation}) , we know that
    \begin{align*}
        \text{Cov}(\mathbf y_z - \hat{\mathbf y}_z) = W_z \geq \text{Cov}(\mathbf y_z - \tilde{\mathbf y}_z) = S_z(S_z'W_z^{-1}S_z)^{-1}.
    \end{align*}
    Altogether, this leads to
    \begin{align*}
        P_zWP_z' \geq W_z \geq S_z(S_z'W_z^{-1}S_z)^{-1},
    \end{align*}
    which implies that the diagonal elements, as well as the trace of the covariances matrices, are as the theorem states.
\end{proof}
\begin{proof}[Proof of Theorem~\ref{th:1}]
    The bottom level $\text{AR}(1)$ model leads to a top level model of $\text{ARMA}(1,1)$ according to Eq.~\eqref{eq:agg_arima}. Based on \cite{NEUBAUER2024BU}, the minT forecast reconciliation method is equivalent to the bottom-up approach. The second inequality is quickly proved using the arguments of \cite{Koreisha2004UpdatingAP} due to improvements in the top level forecasts using the bottom-up aggregated forecast. The bottom level forecast remains untouched here.

    For the first inequality, we take a look at the top level MSE, namely
    \begin{align}
        \text{MSE}_m(\tilde{\mathbf y}_0) &= \mathbb E\left[\left(y^{[m]}_i - \sum_{j=1}^{m}\hat y^{[1]}_{m(i-1)+j|m(i-1)}\right)^2\right] \nonumber \\
         &= \mathbb E\left[\left(\sum_{j=1}^{m} y^{[1]}_{m(i-1)+ j} - \hat y^{[1]}_{m(i-1)+j|m(i-1)}\right)^2\right] \nonumber \\
         &= \sigma^2 \mathbf 1_m' \Phi\Phi'\mathbf 1_m \label{eq:var_lemma} \\
         &\geq \sigma^2 (\mathbf 1_{m-z}'~\mathbf 0_z') \Phi\Phi'(\mathbf 1_{m-z}'~\mathbf 0_z')' \nonumber \\
         &= \mathbb E\left[\left(\sum_{j=z+1}^{m} y_{m(i-1)+j}^{[1]} - \hat y^{[1]}_{m(i-1)+j|m(i-1)+z}\right)^2\right] \label{eq:var_lemma_red}\\
         &= \mathbb E\left[\left(y^{[m]}_i - \sum_{j=1}^z y_{m(i-1)+j}^{[1]} - \sum_{j=z+1}^{m}\hat y^{[1]}_{m(i-1)+j|m(i-1)+z}\right)^2\right] \nonumber \\
         &= \text{MSE}_m(\tilde{\mathbf y}_z), \nonumber
    \end{align}

    where we applied \cite[Lemma 1]{NEUBAUER2024BU} in Eq.~\eqref{eq:var_lemma} and again in Eq.~\eqref{eq:var_lemma_red}. Similarly, at the bottom level the MSE can be reduced by    
    \begin{align*}
        \text{MSE}_1(\tilde{\mathbf y}_0) &= \sum_{j=1}^{m}\mathbb E\left[\left(y^{[1]}_{m(i-1)+j} - \hat y^{[1]}_{m(i-1)+j|m(i-1)}\right)^2\right] \\
        &= \frac{\sigma^2}{1-\phi^2}\sum_{j=1}^m (1-\phi^{2j}) \\
        &\geq \frac{\sigma^2}{1-\phi^2}\sum_{j=1}^{m-z} (1-\phi^{2j}) \\
        &= \sum_{j=z+1}^{m}\mathbb E\left[\left(y^{[1]}_{m(i-1)+j} - \hat y^{[1]}_{m(i-1)+j|m(i-1)+z}\right)^2\right] \\
        &= \text{MSE}_1(\tilde{\mathbf y}_z).
    \end{align*}
    Altogether, this concludes the proof.
\end{proof}
\clearpage
\subsection{Additional Plots}\label{app:plots}
\begin{figure}[!ht]
    \centering
    \includegraphics[width=1\textwidth]{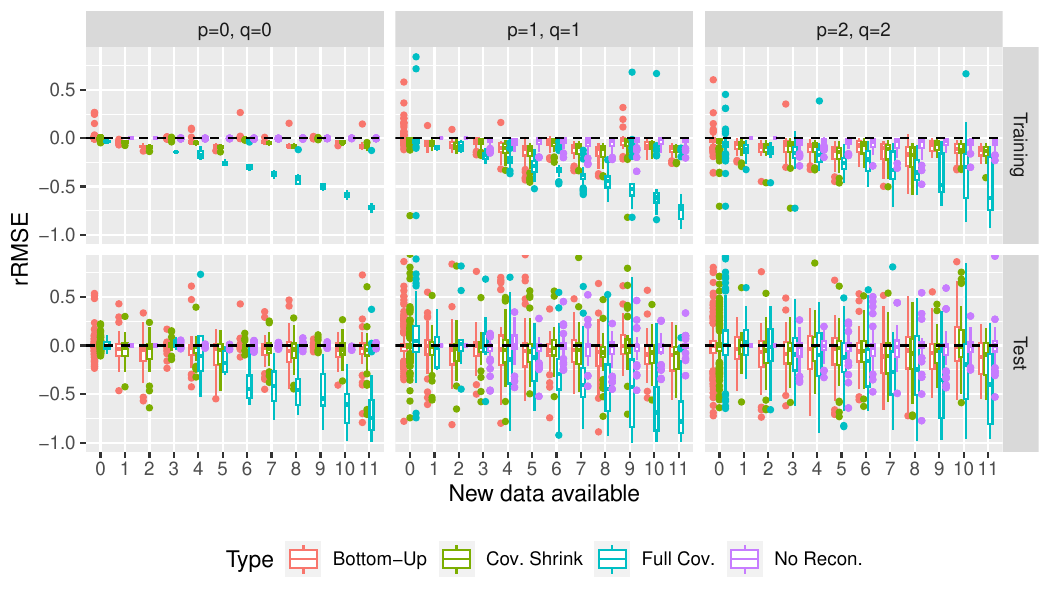}
    \caption{Training and Test rRMSE values of the middle level for $k\in\{12,3,1\}$, and various ARMA data-generating models and automatically selected fitted models.}
    \label{fig:mid_1241}
\end{figure}

\clearpage
\bibliographystyle{apalike}
\bibliography{refs}
\end{document}